
\documentclass[a4paper,10pt,floatfix,superscriptaddress,aps,pra,twocolumn,showpacs,notitlepage,longbibliography,nofootinbib]{revtex4-2}
\bibliographystyle{apsrev4-2}

\usepackage[dvipsnames]{xcolor}
\usepackage{my-colors}
\usepackage[T1]{fontenc}
\usepackage{amsmath}
\usepackage{amsfonts}
\usepackage{amssymb}
\usepackage{amsthm}
\usepackage{enumitem}

\usepackage{pgfplots}
\pgfplotsset{compat=1.18}
\usepgfplotslibrary{fillbetween}
\usetikzlibrary{matrix}

\usepackage{graphicx}
\usepackage{lmodern}
\usepackage{sidecap} 
\usepackage{cancel}
\usepackage{physics}
\usepackage{mathtools}
\usepackage{dsfont}
\usepackage{csquotes}
\usepackage{comment}
\usepackage{ragged2e}
\usepackage{booktabs}
\usepackage[colorlinks]{hyperref}
\hypersetup{
    colorlinks  = true,
    citecolor   = color1!70!black,
    linkcolor   = color2,
    urlcolor    = color3
}
\usepackage{thmtools}
\declaretheorem[parent=section]{theorem}
\usepackage[caption=false]{subfig}


\def\tr{\operatorname{tr}}
\def\mcd{\mathcal{D}}

\DeclarePairedDelimiter{\of}{(}{)}
\DeclarePairedDelimiter{\sof}{\{}{\}}
\DeclarePairedDelimiterX{\midmid}[2]{}{}{#1\;\delimsize\|\;#2}

\newcommand{\phantomcolon}{\mathrel{\phantom{:\mkern-1.2mu}}}
\newcommand{\pvec}[1]{\vec{#1}\mkern2mu\vphantom{#1}}

\newcommand{\ii}{\text{i}}

\begin{document}

\title{A Collection of Pinsker-type Inequalities for Quantum Divergences}
\date{\today}
\author{Kläre Wienecke}
\affiliation{Institut für Theoretische Physik, Leibniz Universität Hannover, Germany}
\author{Gereon Koßmann}
\affiliation{Institute for Quantum Information, RWTH Aachen University, Aachen, Germany}
\author{René Schwonnek}
\affiliation{Institut für Theoretische Physik, Leibniz Universität Hannover, Germany}

\begin{abstract}
 Pinsker's inequality sets a lower bound on the Umegaki divergence of two quantum states in terms of their trace distance. In this work, we formulate corresponding estimates for a variety of quantum and classical divergences including $f$-divergences like Hellinger and $\chi^2$-divergences as well as Rényi divergences and special cases thereof like the Umegaki divergence, collision divergence, max divergence. We further provide a strategy on how to adapt these bounds to smoothed divergences.
\end{abstract} \maketitle

\section{Introduction}

Putting a concrete number on the distance of two quantum states or probability distributions (classical states) can be a quite ambiguous task, as an appropriate answer will strongly depend on the operational tasks at hand.

It is hence no surprise that the field of (quantum and classical) information theory yields a wide range of operationally meaningful and/or mathematically practical measures for quantifying the proximity of two states. Examples considered in this work include families of Rényi divergences, $f$-divergences and smoothed divergences (cf. \, e.g. \cite{Petz1986,Petz2010,Wilde_2014,MllerLennert2013,TomamichelPhDthesis}).

The word divergence is attributed to many of these quantities, as they should quantify how much a state $\sigma$ diverges from a reference state $\rho$. Even though, this term does not have a unique mathematical definition in quantum information processing, there is arguably a consensus that a proper divergence $\mcd(\midmid{\rho}{\sigma})$ should satisfy the data processing inequality: under the action of any channel $\mathcal{C}$\footnote{In this work channels are considered to be completely positive and trace preserving maps \cite{Stinespring1955}.}, the divergence of $\mathcal{C}(\rho)$ relative to $\mathcal{C}(\sigma)$ will decrease when compared to the divergence of $\rho$ relative to $\sigma$. This formalizes the information theoretic intuition that by processing data, i.e.\ by applying $\mathcal{C}$, intrinsic information can only decrease. However, divergences are typically not required to fulfil properties that are somewhat intuitive for geometric distance measures like the triangle inequality or a symmetry between $\rho$ and $\sigma$. 

A distance between states that is usually considered to be ‘the natural distance’ is the \emph{trace distance}
\begin{equation}
T(\rho,\sigma) \coloneqq\frac 12 \Vert \rho -\sigma\Vert_1 =\frac 1 2\tr \left( |\rho-\sigma|\right),
\end{equation} 
often also referred to as total variational distance, normalized 1-norm distance, or statistical distance. It actually ticks boxes in both the geometric and the information-theoretic perspective on distances. Mathematically its a metric. Operationally it has a clear meaning: It quantifies the optimal one-shot success probability for distinguishing $\rho$ from $\sigma$, expressed variationally as an optimization over all quantum effects (cf. \ e.g.\ \cite{Watrous_2018}).

An immediate question that arises when handling divergences, for example in a computation, is: How does the value of a specific divergence $\mcd(\midmid{\rho}{\sigma})$ relate to the value of the trace distance $T(\rho,\sigma)$?
\begin{figure}[t]
    \centering
    \begin{tikzpicture}
    \begin{axis}[
            restrict y to domain=0:12,
            xmin=0,
            xmax=1.1,
            ymin=0,
            ymax=8.5,
            axis lines=left,
            axis line style=thick,
            compat=newest,
            xlabel={$T$}, xlabel style={at={(1,0)}, anchor=west},
            ylabel=$D$, ylabel style={at={(0,1)}, anchor=south, rotate=-90},
            xtick={0,0.5,1},
            set layers,
            axis on top
            ]
        \draw[dashed] (1,0) -- (1,20);
        \addplot[
            only marks,
            color1,
            mark=*,
            fill opacity=0.5,
            draw opacity=0,
            mark size=1.7,
            on layer=axis background
            ] table {relative_5000.csv};
        \addplot[domain=5:500, samples=500, thick, color2, on layer=axis foreground, name path=right bound](
          {((2^x - 1 - x * ln(2)) * (1 - 2^x + 2^x * ln(2^x)))) / ((2^x - 1)^2 * x * ln(2))},
          {(1 - 2^x + 2^x * ln(2^x)) / (2^x - 1)^2 * ln( (x) / (2^x - 1)  )/ln(2) - 2^x * (1 - 2^x + ln(2^x)) / (2^x - 1)^2 * ln((2^x * x)/(2^x - 1)) / ln(2) + ln(ln(2)) / ln(2)}
        );
        \addplot[domain=0:5, samples=350, thick, color2, on layer=axis foreground, name path=left bound](
          {((2^x - 1 - x * ln(2)) * (1 - 2^x + 2^x * ln(2^x)))) / ((2^x - 1)^2 * x * ln(2))},
          {(1 - 2^x + 2^x * ln(2^x)) / (2^x - 1)^2 * ln( (x) / (2^x - 1)  )/ln(2) - 2^x * (1 - 2^x + ln(2^x)) / (2^x - 1)^2 * ln((2^x * x)/(2^x - 1)) / ln(2) + ln(ln(2)) / ln(2)}
        );
        \addplot[domain=0:1, black, thick, on layer=axis foreground]{2/(ln(2))*x^2} node[name=Pinsker]{};
        \draw[color2] (axis cs:1,8) node[name=B_D]{};
        \draw[] (axis cs:0.6605239233325124,4.267934051469621) node[name=p]{};
        
    \end{axis}
    \draw[color2] (B_D) node[anchor=west, yshift=3mm] {$B_D$};
    \draw[black] (Pinsker) node[anchor=west] {Pinsker};
\end{tikzpicture}
    \caption{\label{fig: 1} Improvement of Pinsker's inequality bound comparing the Umegaki divergence $D$ with the trace distance $T$. The dots depict a random sample from the set ${\Omega_{T,D} = \sof{(T(\rho, \sigma), D(\midmid{\rho}{\sigma}))}}$ for a random selection of states $\rho$ and $\sigma$. The red line indicates the optimal convex lower bound $B_D$ on this set while the black line indicates Pinsker's bound. }
\end{figure}
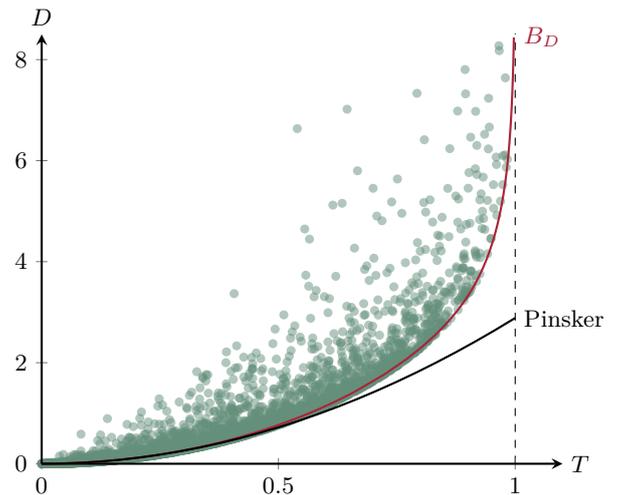
As, for instance, the quantum relative entropy quantifies the asymptotic rate of distinguishing $\rho$ and $\sigma$ in a hypothesis-testing experiment \cite{Hiai1991}, while the trace distance quantifies the corresponding one-shot success probability, it is natural to expect an inequality as a relation between both quantities in best cases. Naturally, there are two different points of view to take here: One can look for upper or lower bounds relating $\mcd$ and $T$. Upper bounds are also known as continuity bounds \cite{Bluhm_2023} and we will refer to lower bounds as Pinsker-type bounds, named after Pinsker's inequality, which in the classical setting provides an answer to the above question by establishing a lower bound on the Kullback-Leibler divergence in terms of the variational distance \cite{Cover_2006}. Analogously, the quantum version of Pinsker's inequality relates the \emph{Umegaki divergence}
\begin{equation}
    D(\midmid{\rho}{\sigma} ) \coloneqq \tr(\rho \log \rho - \rho \log \sigma)
\end{equation}
to the trace distance $T(\rho, \sigma)$:
\begin{equation}
    D( \midmid{\rho}{\sigma}) \geq \frac{1}{2 \ln(2)} \norm{\rho-\sigma}^2_1 = \frac{2}{\ln(2)} T(\rho, \sigma)^2.
\end{equation}

While Pinsker’s bound has become a fundamental tool in quantum information processing, it is only a special case of a broader principle. A complete treatment of how statistical distances relate to each other is beyond Pinsker’s inequality and applies to a much wider class of divergences. Furthermore, while Pinsker's bound is a good choice for small $T$, it is far from optimal for $T$ closer to 1 since the bound diverges at 1, see Figure \ref{fig: 1}. 

Deriving optimal Pinsker-type bounds for a broad class of divergence measures which satisfy the data processing inequality can be done in a surprisingly systematic manner. We develop a general method to obtain such bounds and apply it to various quantum divergences listed in Table \ref{tab: divergences}, the results of which are listed in Table \ref{tab: results}. Furthermore, we show how to adapt the bounds to smoothed divergences including an example also listed in Table \ref{tab: results}.

First, we introduce the strategy of how to find linear and convex (Pinsker-type) bounds using two-dimensional classical states for any divergence fulfilling the data processing inequality and we show how to adapt both linear and convex bounds for smoothed divergences in Section \ref{sec: linear and convex bounds}. Using this strategy, we then present a collection of bounds for a selection of divergences including a detailed analysis of the bounds for Rényi divergences as well as the smoothed max divergence in Section \ref{sec: A collection of bounds} before a brief summary in Section \ref{sec: summary and outlook}.

\section{Linear and Convex bounds}\label{sec: linear and convex bounds}
In order to be in the position to put concrete results on a Pinsker-type inequality, we first have to clarify some basic definitions and geometric/algebraic relations on how such a bound can be formulated. 

The term divergence does unfortunately not have a unique mathematical definition shared amongst all authors. An elegant axiomatic approach by \citeauthor{TomamichelDupuis2013} can be found in \cite{TomamichelDupuis2013}. In fact, for our purposes we actually only need divergences to fulfil the data processing inequality, i.e.\ any divergence in this work ${\mcd:\mathcal{S}(\mathcal{H})_{\neq 0}\times \mathcal{S}(\mathcal{H}) \to \mathbb{R}_{\geq 0}}$, $(\rho, \sigma) \mapsto \mcd(\midmid{\rho}{\sigma})$ needs to satisfy
\begin{equation}\label{eq: data processing}
        \mcd(\midmid{\rho}{\sigma}) \geq \mcd(\midmid{\mathcal{C}(\rho)}{\mathcal{C}(\sigma)}),
\end{equation}
for any channel $\mathcal{C}$, i.e.\ completely positive trace preserving linear map $\mathcal{C}$. Here, we denote by $\mathcal{S}(\mathcal{H})$ the set of states, i.e.\ the set of density operators on a Hilbert space $\mathcal{H}$.  Note that the data processing inequality follows for any divergence satisfying the axioms of \cite{TomamichelDupuis2013}.
Any divergence $\mcd$ considered in this work can also be assumed to be well defined for states coming from any Hilbert space $\mathcal H_n$ of (finite) dimension $n$. 

Applications often also use smoothed divergences of which there exist several definitions, too. In this paper, we define the \emph{$\varepsilon$-smoothed version of $\mcd$} of $\rho$ relative to $\sigma$ as
\begin{equation}
    D^\varepsilon \of*{\midmid{\rho}{\sigma}} \coloneqq \inf_{\rho' \in B^\varepsilon(\rho)} D\of*{\midmid{\rho'}{\sigma}},
\end{equation}
where $\varepsilon > 0$ and the smoothing ball is
\begin{equation}
    B^\varepsilon(\rho)\coloneqq \sof{\rho' \in \mathcal{S}(\mathcal{H}) \mid T(\rho, \rho') \leq \varepsilon}. \footnotemark
\end{equation}
\addtocounter{footnote}{-1}\footnotetext{We consider trace distance smoothed divergences here. Results can be adapted for purified distance smoothed divergences using $B_P^{\sqrt{\varepsilon(2-\varepsilon)}}(\rho) \leq B^\varepsilon_T(\rho) \leq B^\varepsilon_P(\rho)$, where the subscripts indicate the distance used for smoothing \cite{Regula_2025}.}
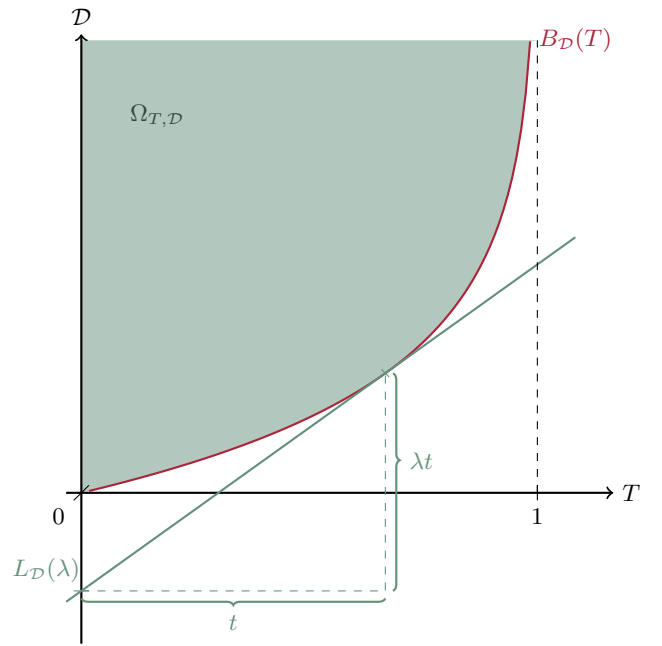
\begin{figure}[t]
    \centering
    \begin{tikzpicture}[]
        \fill [color1!50, domain=0:189/32, samples=100, variable=\x] 
        (0, 6)
        -- plot ({\x}, {ln(6/(6-\x))/ln(2)})
        -- (6, 6)
        -- cycle;
        \draw[thick, ->] (0,-2)--(0,6.08) node[anchor=south]{$\mcd$};
        \draw[thick, ->] (-0.2,0)--(7,0) node[anchor=west]{$T$};
        \draw[](0.1,0.1)--(-0.1,-0.1)node[anchor=north east]{$0$};
        \draw[](6,0.1)--(6,-0.1)node[anchor=north]{$1$};
        \draw[thick, color2] plot [domain=0.1:189/32,samples=100] (\x,{ln(6/(6-\x))/ln(2)});
        \draw[color2](189/32,6)node[anchor=west]{$B_\mcd(T)$};
        \draw[color1] plot[mark=x]coordinates{(4,{ln(3)/ln(2)})} node[anchor=south east]{};
        \draw[thick,color1] plot[smooth,domain=-0.2:6.5] (\x, {1/(2*ln(2))*\x-(2-ln(3))/(ln(2))});
        \draw[color1](-0.1,{-(2-ln(3))/(ln(2))})--(0.1,{-(2-ln(3))/(ln(2))})node[anchor=south east]{$L_\mcd(\lambda)$};
        \draw[color1, dashed](0,{-(2-ln(3))/(ln(2))})--(4,{-(2-ln(3))/(ln(2))})--(4,{ln(3)/ln(2)});
        \draw[dashed](6,0)--(6,6);
        \draw [thick,decoration={brace,mirror},decorate, color1] (0,{-(2-ln(3))/(ln(2))-0.1})--(4,{-(2-ln(3))/(ln(2))-0.1}) node [pos=0.5, anchor=north, yshift=-0.1cm] {$t$};
        \draw [thick, decoration={brace,mirror, aspect=0.6},decorate, color1] (4.1,{-(2-ln(3))/(ln(2))})--(4.1,{ln(3)/ln(2)}) node [pos=0.6, anchor=west, xshift=0.1cm] {$\lambda t$};
        \draw[color1!50!black](1,5) node[]{$\Omega_{T, \mcd}$};
    \end{tikzpicture}
    \caption{\label{fig: tikz legendre} The set $\Omega_{T, \mcd}$ and its linear and convex lower bounds $L_\mcd$ and $B_\mcd$. For each trace distance $t \in (0,1)$, the tangent at $t$ with corresponding slope $\lambda$ intercepts the ordinate at $L_\mcd(\lambda)$. The Legendre transform of the linear lower bound $L_\mcd$ gives the convex lower bound $B_\mcd$.}
\end{figure}

To compare a divergence $\mcd$ to the trace distance we define the two-dimensional set
\begin{equation}
    \Omega_{T, \mcd} \coloneqq \left\{ 
    \left(T(\rho,\sigma), \mcd(\midmid{\rho}{\sigma})\right)  \mid \rho, \sigma \in \mathcal{S}(\mathcal{H}_n), \; n\in \mathbb N
    \right\}
\end{equation}
containing all information on the interrelation between $\mcd$ and $T$.
Pinsker-type inequalities are encoded in the shape of the lower-right boundary curve of this set, see the red line in Figure \ref{fig: 1}.

\begin{table*}[t]
    \centering
    \caption{\label{tab: divergences} Overview of divergences: definitions and binary versions. The $\chi^2$ are written in the version for diagonal states for the sake of simplicity. Note that we take $\log$ to be the logarithm base 2 and $\ln$ to be the natural logarithm.}
        \begin{tabular}{c}
        \toprule
        \begingroup\setlength{\jot}{2ex}
        $
        \begin{alignedat}[t]{7}
        &\text{Divergence Name}   && \mcd(\midmid{\rho}{\sigma})  &&&&  \mcd_\text{bin}(\midmid{r}{s})\\
        \midrule
        &\text{Hellinger-$\alpha$ Divergence  \cite{Hirche_2024}} && D_{H_\alpha} (\midmid{\rho}{\sigma}) &&=\tfrac{1}{\alpha - 1}\of*{\tr\of*{\rho^\alpha \sigma^{1-\alpha}} -1} && \tfrac{1}{\alpha-1}\of*{r^\alpha s^{1-\alpha} + (1-r)^\alpha (1-s)^{1-\alpha}-1} \\
        &\text{Neyman-$\chi^2$ \cite{Hirche_2024, Beigi_2025}} &&\chi_\text{N}^2 \of*{\midmid{\rho}{\sigma}} &&= \tr\bigl(\tfrac{(\rho - \sigma)^2}{\sigma}\bigr) && \tfrac{(r-s)^2}{s(1-s)}\\
        &\text{Pearson-$\chi^2$ \cite{Hirche_2024, Beigi_2025}}&&\chi_\text{P}^2 \of*{\midmid{\rho}{\sigma}} &&= \tr \bigl(\tfrac{(\rho-\sigma)^2}{\rho}\bigr) && \tfrac{(r-s)^2}{r(1-r)}\\
        &\text{Rényi-$\alpha$ Divergence \cite{Tomamichel_2016}} &&D_\alpha(\midmid{\rho}{\sigma}) &&= \tfrac{1}{\alpha-1} \log(\tr(\rho^\alpha\sigma^{1-\alpha})) \qquad&& \tfrac{1}{\alpha-1}\log(r^\alpha s^{1-\alpha} + (1-r)^\alpha (1-s)^{1-\alpha})\\
        &\text{Fidelity Divergence \cite{Tomamichel_2016}} && D_\frac{1}{2}(\midmid{\rho}{\sigma}) &&= -2 \log\bigl(\tr(\sqrt{\rho} \sqrt{\sigma})\bigr) && -2 \log\bigl(\sqrt{rs} + \sqrt{(1-r)(1-s)}\bigr) \\
        &\text{Umegaki Divergence \cite{Tomamichel_2016}}&&D \of*{\midmid{\rho}{\sigma}} &&= \tr (\rho \log(\rho) - \rho \log(\sigma))&& \log\bigl(\of*{\tfrac{r}{s}}^r \bigl(\tfrac{1-r}{1-s}\bigr)^{1-r}\bigr) \\
        &\text{Collision Divergence \cite{Tomamichel_2016}}&&D_2 \of*{\midmid{\rho}{\sigma}} &&= \log\bigl(\tr\bigl(\tfrac{\rho^2}{\sigma}\bigr)\bigr) && \log \bigl( \tfrac{r^2}{s} + \tfrac{(1-r)^2}{1-s}\bigr)\\
        &\text{Max Divergence \cite{Tomamichel_2016}}&&D_\infty\of*{\midmid{\rho}{\sigma}} &&= \inf \sof{\lambda \geq 0 \mid \rho \leq 2^\lambda \sigma} &&\begin{cases}
            \log\Bigl(\tfrac{1-r}{1-s}\Bigr)& r \leq s \\
            \log\of*{\tfrac{r}{s}}&s < r\\
        \end{cases}\\
        &\text{$\varepsilon$-smoothed Max Divergence \cite{Regula_2025, Tomamichel_2016}\quad} &&D^\varepsilon_\infty(\midmid{\rho}{\sigma}) &&= \inf_{\rho' \in B^\varepsilon(\rho)} D_\infty \of*{\midmid{\rho'}{\sigma}} &&\begin{cases}
            \log\bigl(\tfrac{1-r- \varepsilon}{1-s}\bigr) & r + \varepsilon \leq s\\
            \log\of*{\tfrac{r- \varepsilon}{s}}& s < r - \varepsilon\\
            0 & \text{else}
        \end{cases}\\
        \end{alignedat}
        $\endgroup\\
        \bottomrule
        \end{tabular}
\end{table*}

For convex sets $\Omega_{T,\mcd}$ a point $(t,d)$ on the lower-right boundary curve corresponds to the statements\footnote{These types of statements are true for any Pareto convex set. Since convexity implies Pareto convexity, we can make these statements here to give an intuition for lower convex bounds. \cite{ehrgott_2000}}:
\begin{displayquote}
    For all states $\rho$ and $\sigma$ with divergence $\mcd(\midmid{\rho}{\sigma})=d$, the trace distance $T(\rho, \sigma)$ is smaller than $t$.
    
    Conversely, for all states $\rho$ and $\sigma$ with trace distance $T(\rho, \sigma) = t$, the divergence $\mcd(\midmid{\rho}{\sigma})$ is larger than $d$.
\end{displayquote}
It is in general hard to show that these sets $\Omega_{T,\mcd}$ are convex for a general definition of a divergence $\mcd$. However, from experience we know that these sets tend to be convex and that for non-convex sets, it is practical to use the smallest convex set containing $\Omega_{T,\mcd}$ -- its convex hull. In the classical setting, convexity of these sets are for example studied in \cite{Polyanskiy_Wu_2025}. How do we find a convex lower bound on these sets? Let us now describe a strategy on how to obtain convex lower bounds by stopping over at linear lower bounds. 

Every convex set can be seen as the intersection of closed half-spaces, which in turn can be defined via linear inequalities. The linear inequalities in question here are
\begin{equation}
    \mcd(\midmid{\rho}{\sigma}) - \lambda T(\rho, \sigma) \geq L_\mcd(\lambda),
    \label{eq: linear inequality}
\end{equation}
where for every trace distance $T$, we find a corresponding slope $\lambda$ and a lower bound $L_\mcd$, see Figure \ref{fig: tikz legendre}. Naturally, $L_\mcd$ is found by solving the optimization problem 
\begin{align}\label{eq: definition L_D}
    &L_{\mcd}(\lambda)=\inf_{\rho, \sigma} \sof{\mcd(\midmid{\rho}{\sigma}) - \lambda T(\rho, \sigma )}\\ 
    &\operatorname{s.th.} \rho, \sigma \in \mathcal{S}(\mathcal{H}_n), \, n\in \mathbb{N} .\nonumber
\end{align}
Equation \eqref{eq: linear inequality} holds in particular for the largest $\lambda$ and we thus define
\begin{equation}\label{eq: definition B_D}
    B_\mcd(T ) \coloneqq \sup_\lambda \sof{L_\mcd(\lambda) + \lambda T}
\end{equation}
as the convex lower bound, which is a Legendre-type transform of $L_\mcd$. With that, we have constructed a convex lower bound $B_\mcd$ to any divergence $\mcd$ as a function of the trace distance:
\begin{equation}
    \mcd(\midmid{\rho}{\sigma}) \geq B_\mcd \left(T (\rho, \sigma) \right).
\end{equation}
Both the linear bound $L_\mcd$ and the convex lower bound $B_\mcd$ are illustrated in Figure \ref{fig: tikz legendre}. 

It is clear that the problem of finding a convex lower bound $B_\mcd$ has become a two-step problem: First, one finds the linear lower bound $L_\mcd$ by solving the minimization problem \eqref{eq: definition L_D}, from which then, in the second step, $B_\mcd$ is obtained using \eqref{eq: definition B_D}. Note that, in some applications it may even be more useful to use the linear bound and not go all the way up to the convex bound.

Optimizing over all states from any dimension, possibly of very large size, in order to find the linear lower bound  \eqref{eq: definition L_D} in the first step may seem like quite a difficult task. Fortunately, dealing with high dimensions can be avoided for the computation of optimal bounds. The central observation, used throughout this work, is that the optimum in \eqref{eq: definition L_D} is attained on binary jointly diagonal states. We define the \emph{binary version of a divergence} $\mcd$ as 
\begin{equation}\label{eq:d2}
    \mcd_\text{bin}(\midmid{r}{s} ) \coloneqq \mcd \of*{
    {
    \begin{pmatrix}
        r & 0\\ 0 & 1-r
    \end{pmatrix}} \,\bigg\|\,
    {  
    \begin{pmatrix}
        s & 0\\ 0 & 1-s
    \end{pmatrix}
    }}
\end{equation}
where $r, s \in [0,1]$. 
It is well known \cite{Wilde_2016} that the value of the trace distance in attained by the Helstrom measurement. The corresponding measurement channel can therefore be used to reduce the value of a divergence while keeping the value of a trace distance constant. This simple observation is exactly the argument needed to characterize the optimal bound in a Pinsker inequality. Formalizing this intuition leads to the following  theorem.

\begin{theorem}\label{thm: main theorem}
For any divergence $\mcd$ that fulfils the data processing inequality $\mcd(\midmid{\rho}{\sigma}) \geq \mcd(\midmid{\mathcal{C}(\rho)}{\mathcal{C}(\sigma)}) \;\forall \mathcal{C}$, the optimal convex lower bound $B_\mcd$ is attained by two-dimensional classical states. The optimal linear lower bound is given by the optimization
    \begin{align}
       &L_{\mcd}(\lambda)=\inf_{r,s} \sof{\mcd_\textup{bin}(\midmid{r}{s})- \lambda (r-s) }\quad \\ &\text{s.\ th.\ } 0\leq s \leq r\leq 1.\nonumber
    \end{align}
\end{theorem}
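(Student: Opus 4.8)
The strategy is to show two inequalities, corresponding to the two directions of optimization. On one side, restricting the infimum in \eqref{eq: definition L_D} to binary jointly diagonal states (with the ordering $s \le r$) can only increase the value, so
\begin{equation}
  L_{\mcd}(\lambda) \;\le\; \inf_{0 \le s \le r \le 1} \sof{\mcd_\textup{bin}(\midmid{r}{s}) - \lambda(r-s)}. \nonumber
\end{equation}
The real content is the reverse inequality, i.e.\ that \emph{no} pair of states in \emph{any} finite dimension does better than the best binary diagonal pair. The key tool, as the authors announce, is the Helstrom measurement.

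For the reverse direction I would proceed as follows. Fix arbitrary $\rho, \sigma \in \mathcal{S}(\mathcal{H}_n)$ and let $t = T(\rho,\sigma)$. Let $\{P, \mathds{1} - P\}$ be the Helstrom two-outcome POVM, where $P$ is the projector onto the positive part of $\rho - \sigma$; the associated measurement channel $\mathcal{C}(X) = \tr(PX)\,\ket{0}\!\bra{0} + \tr((\mathds{1}-P)X)\,\ket{1}\!\bra{1}$ is CPTP and maps $\rho, \sigma$ to the diagonal binary states with parameters $r = \tr(P\rho)$ and $s = \tr(P\sigma)$. The defining property of the Helstrom measurement is that it attains the trace distance, i.e.\ $r - s = \tr(P(\rho-\sigma)) = \tfrac12\|\rho-\sigma\|_1 = t$, and moreover $r \ge s$ since $P$ projects onto the support of $(\rho-\sigma)_+$. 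Data processing \eqref{eq: data processing} then gives
\begin{equation}
  \mcd(\midmid{\rho}{\sigma}) - \lambda\, T(\rho,\sigma) \;\ge\; \mcd_\textup{bin}(\midmid{r}{s}) - \lambda(r-s) \;\ge\; \inf_{0\le s\le r\le 1}\sof{\mcd_\textup{bin}(\midmid{r}{s}) - \lambda(r-s)}, \nonumber
\end{equation}
where the trace-distance term is left untouched precisely because the Helstrom channel preserves it. Taking the infimum over all $\rho,\sigma$ on the left yields the reverse inequality, and the two bounds together give the claimed formula for $L_{\mcd}(\lambda)$.

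Once $L_{\mcd}$ is expressed through binary states, the statement about $B_{\mcd}$ is immediate: by \eqref{eq: definition B_D}, $B_{\mcd}(T) = \sup_\lambda \sof{L_{\mcd}(\lambda) + \lambda T}$ is a functional of $L_{\mcd}$ alone, so if $L_{\mcd}$ is attained (in the limit) by two-dimensional classical states, then so is $B_{\mcd}$. It is worth remarking that the infimum in the binary problem need not be a minimum — it may only be approached — but this causes no trouble since \eqref{eq: definition L_D} is already phrased as an infimum; one should simply be careful to phrase the conclusion as ``attained by two-dimensional classical states'' in the same limiting sense.

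The main obstacle is not really a computational one but a definitional subtlety: one must check that the Helstrom measurement channel is admissible, i.e.\ that the pair $(\mathcal{C}(\rho), \mathcal{C}(\sigma))$ lies in the domain $\mathcal{S}(\mathcal{H})_{\neq 0} \times \mathcal{S}(\mathcal{H})$ on which $\mcd$ is defined — in particular that $\mathcal{C}(\rho) \neq 0$, which holds automatically since $\mathcal{C}$ is trace preserving. A second point deserving a line of justification is the claim $r \ge s$: this is where the specific structure of the Helstrom projector (onto the positive spectral subspace of $\rho - \sigma$) enters, as opposed to an arbitrary optimal POVM, and it is what lets us impose the ordering constraint $s \le r$ in the final optimization rather than optimizing over all of $[0,1]^2$.
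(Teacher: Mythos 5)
Your proposal is correct and follows essentially the same route as the paper: the ``classicalization channel'' used there is exactly your Helstrom measurement channel (tracing over the positive and negative spectral blocks of $\rho-\sigma$), combined with trace-distance preservation and data processing. If anything, you are slightly more careful than the paper in separating the two inequality directions and in noting that $r \ge s$ comes from the choice of $P$ as the positive-part projector.
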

\begin{proof}
    see Appendix \ref{Appendix: Proof of main theorem}.
\end{proof}

This strategy for finding linear and convex lower bounds is not directly applicable for smoothed divergences, because it is a priori not clear how the smoothing effects the minimizations. However, both linear and convex lower bounds can be adapted for smoothed divergences as stated in the following theorem.

\begin{theorem}\label{thm: smoothed bounds}
    Let $\mcd: \mathcal{S}(\mathcal{H})_{\neq 0} \times \mathcal{S}(\mathcal{H}) \to \mathbb{R}$ be a divergence that fulfils 
    \begin{itemize}
        \item[(i)] data-processing 
        \item[(ii)] non-negativity
        \item[(iii)] convexity in the first argument
        \item[(iv)] faithfulness, i.e. $\mcd(\midmid{\rho}{\rho}) = 0$. 
    \end{itemize}
    Let $L_\mcd$ be the optimal linear lower bound and $B_\mcd$ be the optimal convex lower bound for this divergence. Additionally, let $\mcd^\varepsilon$ be the $\varepsilon$-smoothed version of $\mcd$, which also fulfils the data-processing inequality.
    
    Then, the optimal linear lower bound of $\mcd^\varepsilon$ is given by
    \begin{equation}
        L_{\mcd^\varepsilon}(\lambda)= \begin{cases}
            0 & \text{for } 0 \leq \lambda \leq \lambda_\varepsilon, \\
            L_\mcd(\lambda) - \lambda \varepsilon  & \text{for }\lambda_\varepsilon < \lambda \leq \lambda_\text{max},
        \end{cases}
    \end{equation}
    where $\lambda_\varepsilon = \left. \frac{\dd B_\mcd}{\dd T}\right\vert_{T=\varepsilon}$ and $\lambda_\text{max} = \left.\frac{\dd B_\mcd}{\dd T}\right\vert_{T=1-\varepsilon}$. The optimal convex lower bound is given by
    \begin{equation}
        B_{\mcd^\varepsilon}(T)= \begin{cases}
            0 & \text{for } 0 \leq T \leq \varepsilon,\\
            B_\mcd(T - \varepsilon) & \text{for }\varepsilon < T \leq 1.
        \end{cases}
    \end{equation}
\end{theorem}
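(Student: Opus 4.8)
The plan is to reduce the entire statement to the binary classical optimization of Theorem~\ref{thm: main theorem}, applied now to $\mcd^\varepsilon$. That theorem applies because $\mcd^\varepsilon$ satisfies data processing (as assumed, and in fact automatically: any channel $\mathcal{C}$ satisfies $\mathcal{C}(B^\varepsilon(\rho))\subseteq B^\varepsilon(\mathcal{C}(\rho))$ since channels do not increase the trace distance, so $\mcd^\varepsilon(\midmid{\mathcal{C}(\rho)}{\mathcal{C}(\sigma)})\le\mcd^\varepsilon(\midmid{\rho}{\sigma})$ follows from data processing of $\mcd$). Hence
\[
L_{\mcd^\varepsilon}(\lambda)=\inf_{0\le s\le r\le 1}\bigl\{\mcd^\varepsilon_\text{bin}(\midmid{r}{s})-\lambda(r-s)\bigr\},
\]
and $B_{\mcd^\varepsilon}$ is the Legendre-type transform \eqref{eq: definition B_D} of $L_{\mcd^\varepsilon}$, so everything comes down to identifying $\mcd^\varepsilon_\text{bin}$.

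The first key step is that the smoothing inside $\mcd^\varepsilon_\text{bin}(\midmid{r}{s})$ can be performed within classical binary states. Given any $\rho'$ with $T(\rho',\operatorname{diag}(r,1-r))\le\varepsilon$, the dephasing (pinching) channel $\mathcal{P}$ in the computational basis fixes both $\operatorname{diag}(r,1-r)$ and $\operatorname{diag}(s,1-s)$, does not increase the trace distance, and (by data processing) does not increase $\mcd$; thus $\mathcal{P}(\rho')\in B^\varepsilon(\operatorname{diag}(r,1-r))$ is again diagonal and has smaller divergence, which gives $\mcd^\varepsilon_\text{bin}(\midmid{r}{s})=\inf_{|r'-r|\le\varepsilon,\,r'\in[0,1]}\mcd_\text{bin}(\midmid{r'}{s})$. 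By convexity in the first argument, together with non-negativity and faithfulness, $r'\mapsto\mcd_\text{bin}(\midmid{r'}{s})$ is convex, non-negative, and vanishes at $r'=s$, hence non-increasing as $r'$ moves toward $s$ from either side. Therefore $\mcd^\varepsilon_\text{bin}(\midmid{r}{s})=0$ when $|r-s|\le\varepsilon$, equals $\mcd_\text{bin}(\midmid{r-\varepsilon}{s})$ when $r>s+\varepsilon$, and equals $\mcd_\text{bin}(\midmid{r+\varepsilon}{s})$ when $r<s-\varepsilon$ --- exactly the pattern behind the $\varepsilon$-smoothed max-divergence entry of Table~\ref{tab: divergences}.

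For the convex bound I would argue directly. Validity: if $T(\rho,\sigma)\le\varepsilon$ then $\sigma\in B^\varepsilon(\rho)$ and faithfulness gives $\mcd^\varepsilon(\midmid{\rho}{\sigma})=0$; if $T(\rho,\sigma)>\varepsilon$ then the triangle inequality gives $T(\rho',\sigma)\ge T(\rho,\sigma)-\varepsilon$ for every $\rho'\in B^\varepsilon(\rho)$, so monotonicity of $B_\mcd$ together with $\mcd\ge B_\mcd\circ T$ yields $\mcd^\varepsilon(\midmid{\rho}{\sigma})\ge B_\mcd(T(\rho,\sigma)-\varepsilon)$; the resulting piecewise function is non-negative and convex (it is continuous at $\varepsilon$, with left slope $0$ and right slope $B_\mcd'(0^+)\ge 0$), hence a legitimate convex lower bound. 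Tightness: from the explicit form of $\mcd^\varepsilon_\text{bin}$, the point $(r-s,\ \mcd^\varepsilon_\text{bin}(\midmid{r}{s}))$ lies in $\Omega_{T,\mcd^\varepsilon}$, and for $r-s=T>\varepsilon$ one has $\mcd^\varepsilon_\text{bin}(\midmid{r}{s})\le\mcd_\text{bin}(\midmid{r-\varepsilon}{s})$; taking the infimum over admissible $(r,s)$ and substituting $\tilde r=r-\varepsilon$ returns the binary description of $B_\mcd$ at trace distance $T-\varepsilon$, modulo the extra constraint $\tilde r\le 1-\varepsilon$ discussed below. For the linear bound I would insert the three-case formula into the optimization of Theorem~\ref{thm: main theorem}: on $\{r-s\le\varepsilon\}$ the objective is $-\lambda(r-s)$ with infimum $-\lambda\varepsilon$, while on $\{r-s>\varepsilon\}$ the shift $\tilde r=r-\varepsilon$ turns it into $\mcd_\text{bin}(\midmid{\tilde r}{s})-\lambda(\tilde r-s)-\lambda\varepsilon$, i.e.\ the $L_\mcd$ problem shifted by $-\lambda\varepsilon$ and constrained by $\tilde r\le 1-\varepsilon$. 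Matching $\lambda$ to the derivative of $B_\mcd$ --- the constraint $\tilde r\le 1-\varepsilon$ becoming active at $\lambda=\lambda_\text{max}=\frac{\dd B_\mcd}{\dd T}\vert_{T=1-\varepsilon}$, and the supporting line of $B_{\mcd^\varepsilon}$ leaving the flat piece at $\lambda=\lambda_\varepsilon=\frac{\dd B_\mcd}{\dd T}\vert_{T=\varepsilon}$ --- produces the two regimes; that the Legendre transform \eqref{eq: definition B_D} of the resulting $L_{\mcd^\varepsilon}$ reproduces $B_{\mcd^\varepsilon}$ is then a consistency check.

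I expect the delicate part to be the boundary behaviour near $T=1$. The binary states realising $B_\mcd(T-\varepsilon)$ need not have first eigenvalue $\le 1-\varepsilon$, so one cannot always \enquote{push $\rho$ away from $\sigma$} by an extra $\varepsilon$ of trace distance without leaving the state space; equivalently, the shifted problem is genuinely constrained to $\tilde r\le 1-\varepsilon$. Showing that this constrained infimum still equals $B_\mcd(T-\varepsilon)$ for all $T\le 1$ --- e.g.\ by checking that the supporting lines of $B_\mcd$ on $[0,1-\varepsilon]$ already determine its value there, or by a limiting argument as $T\to 1$ --- is the step needing the most care and is what forces $\lambda_\text{max}$ into the statement. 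A secondary technical point, inherited from Section~\ref{sec: linear and convex bounds}, is that $\Omega_{T,\mcd}$ need not be convex, so throughout, \enquote{$B_\mcd$} must be read as the optimal convex lower bound --- the lower boundary of the convex hull of $\Omega_{T,\mcd}$ --- and the binary infima interpreted accordingly.
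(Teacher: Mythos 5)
Your proof is correct and follows essentially the same route as the paper's: reduce to binary states via the classicalization channel of Theorem \ref{thm: main theorem}, pinch the smoothing ball onto the diagonal, use convexity, non-negativity and faithfulness to evaluate the inner infimum as the three-case formula, and then shift $r\mapsto r-\varepsilon$ to recover $L_\mcd(\lambda)-\lambda\varepsilon$ and its Legendre transform. Your explicit discussion of the constraint $\tilde r\le 1-\varepsilon$ and its role in forcing $\lambda_\text{max}$ into the statement is a point the paper's own proof passes over silently (it writes the unconstrained infimum over $0\le s\le r'\le 1$ after the substitution), so that remark is a genuine, if small, improvement rather than a deviation.
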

\begin{proof}
    see Appendix \ref{Appendix: Proof of smoothed theorem}.
\end{proof}
Note that $\varepsilon$-smoothing shifts the linear lower bound linearly in $\varepsilon$ and the convex lower bound by $\varepsilon$. Both bounds vanish in the region where the divergence can attain zero. Figure \ref{fig: convex bound smoothed divergence} compares the convex lower bounds of the smoothed and unsmoothed max divergence.
\begin{figure}
    \centering
    \begin{tikzpicture}
        \begin{axis}[
            restrict y to domain=0:12,
            xmax=1.1,
            ymax=8.5,
            axis lines=left,
            axis line style=thick,
            compat=newest,
            xlabel=$T$, xlabel style={at={(1,0)}, anchor=west},
            ylabel=$D^\varepsilon_\infty$, ylabel style={at={(0,1)}, anchor=south, rotate=-90},
            xtick={0,0.2,0.5,1},
            xticklabels={0,$\varepsilon$,0.5,1},
            axis on top
            ]
            \draw[dashed] (1,0) -- (1,20);
            \addplot[domain=0:1, color2, thick, samples=400, dashed] {ln(1/(1-x))/ln(2)} node[name=B_D]{};
            \addplot[domain=0:1, color2, thick, name path=right bound, samples=400] {ln(1/(1-x+.2))/ln(2)} node[name=B_D_smooth]{};
            \addplot[domain=0:0.2, color2, thick, name path=left bound] {0*x};
            \path[name path=clippath right] (0.19,0)--(0.19,8.5) -- (1,8.5);
            \path[name path=clippath left] (0,0)--(0,8.5) -- (0.2,8.5);
            \addplot[color1] fill between [of=right bound and clippath right, soft clip={domain=0:1}];
            \addplot[color1] fill between [of=left bound and clippath left, soft clip={domain=0:0.2}];
        \end{axis}
    \draw[color2] (B_D) node[anchor=south] {$B_{D_\infty}$};
    \draw[color2] (B_D_smooth) node[anchor=west] {$B_{D^\varepsilon_\infty}$};
    \end{tikzpicture}
    \caption{Convex lower bound for smoothed quantum max divergence. The bound for the smoothed max divergence (solid red line) is obtained by shifting the bound for the unsmoothed max divergence (dashed red line) by $\varepsilon$ to the right and cutting off at $D^\varepsilon_\infty = 0$ and $T=1$.}
    \label{fig: convex bound smoothed divergence}
\end{figure}
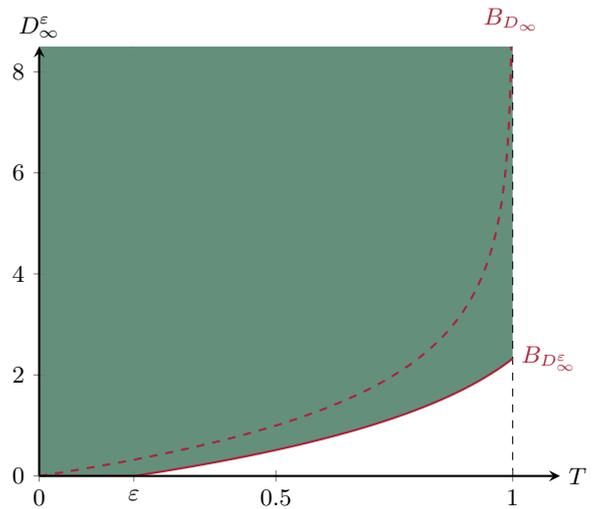

Using these two theorems, we can now obtain bounds for the collection of divergences found in Table \ref{tab: divergences} and improve on Pinsker's inequality.

\section{A collection of bounds}\label{sec: A collection of bounds}
\begin{table*}[t]
    \centering
    \caption{\label{tab: results} Overview of the linear and convex bounds $L_\mcd$ and $B_\mcd$.}
    \begin{tabular}{c}
    \toprule
    \begingroup\setlength{\jot}{2ex}
    $
    \begin{alignedat}[t]{9}
        &\mcd   && L_\mcd(\lambda) && \lambda-\text{range}  &&  B_\mcd(T) && T-\text{range}\\
        \midrule
        &D_{H_\alpha} &&\begin{cases}
                \text{numerically: \ref{sec: numerics}} \\
                \frac{1}{1-\alpha}\of*{1+\lambda(\alpha - 1 - \alpha \lambda^{-\frac{1}{\alpha}})}
            \end{cases} && \begin{array}{l}
                    0\leq \lambda < \of*{\frac{\alpha}{\alpha-1}}^\alpha\\
                    \of*{\frac{\alpha}{\alpha-1}}^\alpha\leq \lambda \leq \infty
        \end{array}\,\,&& \begin{cases}
                \text{numerically: \ref{sec: numerics}} \\
                \frac{1-(1-T)^{1-\alpha}}{1-\alpha} 
            \end{cases} && \begin{array}{l}
        0\leq T < \frac{1}{\alpha}\\
        \frac{1}{\alpha}\leq T \leq 1
        \end{array}\\
        &\chi_\text{N}^2 &&\begin{cases}
                -\frac{\lambda^2}{16}\\
                -(\sqrt{\lambda}-1)^2
            \end{cases} && \begin{array}{l}
                    0 \leq \lambda \leq 4\\
                    4 \leq \lambda \leq \infty
        \end{array} && \begin{cases}
                4T^2  \\
                \frac{T}{1-T}
            \end{cases} && \begin{array}{l}
        0 \leq T \leq \tfrac{1}{2}\\
        \tfrac{1}{2} \leq T \leq 1
        \end{array}\,\,\\
        &\chi_\text{P}^2 &&\begin{cases}
                -\frac{\lambda^2}{16}\\
                -(\sqrt{\lambda}-1)^2
            \end{cases} && \begin{array}{l}
                    0 \leq \lambda \leq 4\\
                    4 \leq \lambda \leq \infty
        \end{array} && \begin{cases}
                4T^2  \\
                \frac{T}{1-T}
            \end{cases} && \begin{array}{l}
        0 \leq T \leq \tfrac{1}{2}\\
        \tfrac{1}{2} \leq T \leq 1
        \end{array}\,\,\\
        &D_{\alpha\in [1,\infty]} &&\begin{cases}
                \text{numerically: \ref{sec: numerics}} \\
                \frac{1}{\ln(2)} - \lambda + \log( \lambda \ln(2))
            \end{cases} && \begin{array}{l}
                    0\leq \lambda < \frac{\alpha}{(\alpha-1)\ln(2)}\\
                    \frac{\alpha}{(\alpha-1)\ln(2)}\leq \lambda \leq \infty
        \end{array}\,\,&& \begin{cases}
                \text{numerically: \ref{sec: numerics}} \\
                \log{\Bigl(\frac{1}{1-T}\Bigr)} 
            \end{cases} && \begin{array}{l}
        0\leq T < \frac{1}{\alpha}\\
        \frac{1}{\alpha}\leq T \leq 1
        \end{array}\\
        &D_{\frac{1}{2}} && \tfrac{1}{\ln(2)} \of*{1-\sqrt{1+\lambda^2\ln(2)^2}+\ln\of*{\tfrac{1+\sqrt{1+\lambda^2\ln(2)^2}}{2}} }&&\,\,0 \leq \lambda \leq \infty && \log \of*{\tfrac{1}{1-T^2}} && \,\,0 \leq T \leq 1 \\
        &D&&\begin{array}{l}\log\of*{\of*{\tfrac{r^*}{s^*}}^{r^*} \of*{\tfrac{1-r^*}{1-s^*}}^{1-r^*}}\\
        \text{ with } (r^*, s^*) = (\tfrac{2^\lambda(2^\lambda - 1 - \lambda \ln(2))}{(2^\lambda-1)^2}, \tfrac{1}{1-2^\lambda} + \tfrac{1}{\lambda \ln(2)})\end{array} && \,\,0 \leq \lambda \leq \infty && \text{parametrized: \ref{sec: parametrized bound}} &&\,\, 0 \leq T \leq 1 \\
        &D_2&&\begin{cases}
                \frac{1}{\ln(4)}\of*{-2-\sqrt{4-\lambda^2\ln(2)^2} - 2 \ln\of*{\frac{2-\sqrt{4-\lambda^2\ln(2)^2}}{4}}} \\
                \frac{1}{\ln(2)} - \lambda + \log( \lambda \ln(2))
            \end{cases} && \begin{array}{l}
                    0\leq \lambda < \frac{2}{\ln(2)}\\
                    \tfrac{2}{\ln(2)}\leq \lambda \leq \infty 
        \end{array} && \begin{cases}
                \log(1+4T^2) \\
                \log\of*{\tfrac{1}{1-T}}
            \end{cases} && \begin{array}{l}
        0 \leq T \leq \tfrac{1}{2}\\
        \tfrac{1}{2} \leq T \leq 1
        \end{array}\\
        &D_\infty \quad&&\tfrac{1}{\ln(2)} - \lambda + \log( \lambda \ln(2)) &&\,\, 0 \leq \lambda \leq \infty &&\log\of*{\tfrac{1}{1-T}} &&\,\, 0 \leq T \leq 1\\
        &D^\varepsilon_\infty \quad&&\begin{cases}
                0 \\
                \tfrac{1}{\ln(2)} - \lambda + \log( \lambda \ln(2)) - \lambda \varepsilon 
            \end{cases} && \begin{array}{l}
                    0 \leq \lambda \leq \frac{1}{\ln(2)(1-\varepsilon)} \\
                    \frac{1}{\ln(2)(1-\varepsilon)} < \lambda \leq \frac{1}{\ln(2)\varepsilon} 
        \end{array} && \begin{cases}
                0 \\
                \log\of*{\tfrac{1}{1-(T-\varepsilon)}}
            \end{cases} && \begin{array}{l}
            0 \leq T \leq \varepsilon \\
        \varepsilon < T \leq 1
        \end{array}\\
        \end{alignedat}
    $\endgroup\\
    \bottomrule
    \end{tabular}
\end{table*}

There are numerous divergences found in the literature for which one could try to find convex Pinsker-type and linear bounds. Our divergence selection can be found in Table \ref{tab: divergences}. It includes the definitions of the divergences as well as their binary versions. Proofs that all these divergences indeed satisfy the data-processing inequality and we thus can use Theorem \ref{thm: main theorem} can be found in \cite{Hiai_2017,Wilde_2018, Audenaert_2015, Anshu_2019}. 

Using the strategy introduced in the previous section, we obtain the bounds presented in Table \ref{tab: results}. Probably the most interesting cases are that of the Rényi divergence and that of the smoothed max divergence. We showcase both in detail in Sections \ref{sec: Example Renyi divergence} and \ref{sec: example smoothed}. Not only does the Rényi case include the Umegaki, collision and max divergence for specific choices of $\alpha$,  but it also highlights the questions that arise when applying our strategy. 

Some noteworthy observations on the bounds in Table \ref{tab: results} are:
\begin{enumerate}[label={(\roman*)}]
    \item Most of the bounds for unsmoothed divergences are piecewise functions. The pieces transition into each other continuous-differentiably. The example in Section \ref{sec: Example Renyi divergence} illustrates where this piecewise nature comes from.
    \item The bounds for both $\chi^2$-divergences are in fact the same: This is obvious from the fact that their binary versions $\mcd_\text{bin}$ are the same under permutation of $r$ and $s$ and hence, we find the same value in the combined $r,s$ optimization.
    \item For all Rényi divergences with $\alpha \in [1,\infty]$ the convex lower bound for $\frac{1}{\alpha} \leq T \leq 1 $ is identical. In particular, it is valid for the whole range of $T$ for the max divergence ($\alpha \to \infty$). For the collision divergence ($\alpha = 2$), it holds for half the $T$-range, i.e.\  $\frac{1}{2} \leq T \leq 1$.
    \item To our knowledge, not all bounds can be calculated analytically as functions of $T$. For these cases, we present parametrized and numerical solutions.
    \item Clearly, the bound for the Umegaki divergence (i.e.\  $D_\alpha$ in the limit $\alpha \to 1$) is not only a bound to the Umegaki divergence, but also to all other Rényi divergences of order $\alpha \in [1,\infty]$, albeit not a tight one for the whole range, see also Figure \ref{fig: numerical convex bounds}.
    \item The bounds for the $\varepsilon$-smoothed max divergence are a shifted and slightly adjusted version of the bounds of the max divergence. This is discussed further in Section \ref{sec: example smoothed}.
    \item The bound for the fidelity divergence we obtained here is known by the name of \emph{Bertagnolle-Huber inequality} in the classical setting \cite{Tsybakov_2009}.
\end{enumerate}

\subsection{Example: Rényi Divergence}\label{sec: Example Renyi divergence}

\begin{figure}
  \centering
  \includegraphics{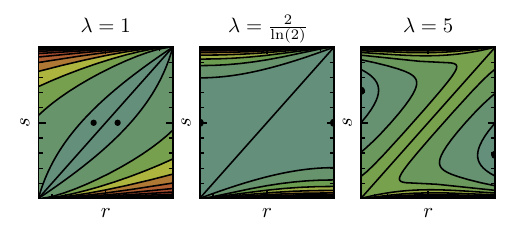}
  \caption{\label{fig: three_mini_plots}$\xi_{\alpha, \lambda}$ on the unit square for $\alpha = 2$ and different values of $\lambda$. The two minima (black dots) move to the vertical boundaries as $\lambda$ increases. For $\lambda \geq \frac{2}{\ln(2)}$ the minima lie at the boundary. Due to the symmetry it is clear that the values of both minima coincide.}
\end{figure}

To illustrate how to use our main theorem and find linear and convex bounds $L_\mcd$ and $B_\mcd$, we go through the example of the Rényi divergence. The Rényi divergence is of special interest in several regards: First, its edge cases are the Umegaki divergence and the max divergence, both of which find use in numerous applications. Second, it is a nice illustration of the cases that occur when calculating linear and convex bounds.

There are different quantum Rényi divergences to be found in the literature, for an overview and discussion see for example \cite{Audenaert_2015}. The two most commonly used ones are the Petz and the sandwiched Rényi divergences of order $\alpha \in[\frac{1}{2},1)\cup(1, \infty)$. The Petz Rényi divergence of order $\alpha$ is given by
\begin{equation}
    D_\alpha \of*{\midmid{\rho}{\sigma}} \coloneqq 
    \tfrac{1}{\alpha - 1}\log(\tr(\rho^\alpha \sigma^{1-\alpha}))
\end{equation}
and the sandwiched Rényi divergence of order $\alpha$ by
\begin{align}
    \widetilde{D}_\alpha \of*{\midmid{\rho}{\sigma}} &\coloneqq 
    \tfrac{1}{\alpha - 1}\log\norm*{\sigma^{\frac{1-\alpha}{2 \alpha}} \rho  \sigma^{\frac{1-\alpha}{2 \alpha}}}_\alpha^\alpha\\
     & \phantomcolon = \tfrac{1}{\alpha - 1}\log\of*{\tr\of*{\of*{ \sigma^{\frac{1-\alpha}{2 \alpha}} \rho  \sigma^{\frac{1-\alpha}{2 \alpha}}}^\alpha}}
\end{align}
for $(\alpha < 1 \text{ and }\rho \not\perp \sigma) \text{ or }\rho \ll \sigma$ and otherwise by $ D_\alpha \of*{\midmid{\rho}{\sigma}} = \widetilde{D}_\alpha \of*{\midmid{\rho}{\sigma}} = \infty$.

Naturally, the binary versions of all Rényi divergences coincide to be
\begin{equation}
    D_{\alpha, \text{bin}} = \frac{1}{\alpha-1}\log(r^\alpha s^{1-\alpha} + (1-r)^\alpha (1-s)^{1-\alpha})
\end{equation}
and hence for obtaining lower bounds it does not matter which one is used. This is of course the case because all quantum Rényi divergences are extensions of the same classical Rényi divergence \cite{Tomamichel_2016}. 

We will see that it is possible to calculate linear and convex lower bounds analytically as functions of the trace distance for the fidelity divergence ($\alpha = \frac{1}{2}$), the collision divergence ($\alpha = 2$), max divergence ($\alpha = \infty$) and Rényi divergences for general $\alpha \in [1,\infty]$ for the restricted region $\frac{1}{\alpha} \leq T \leq 1$. We go through the cases for $\alpha \in [1, \infty]$ here and leave the calculation for $\alpha = \frac{1}{2}$ to Appendix \ref{Appendix: alpha 1/2}.

As a first step we use theorem \ref{thm: main theorem} to find the optimal linear lower bound $L_{D_\alpha}$ by solving the optimization problem
\begin{align}
       &L_{D_\alpha}(\lambda)=\inf_{r,s} \sof{D_{\alpha,\text{bin}}(\midmid{r}{s})- \lambda (r-s) }\quad \\ &\operatorname{s.th. } 0\leq s \leq r\leq 1.\nonumber
\end{align}
For this purpose, we define the function
\begin{equation}
    \xi_{\alpha, \lambda}(r,s) \coloneqq D_{\alpha,\text{bin}}(\midmid{r}{s})- \lambda \abs{r-s},
\end{equation}
of which we need to find the minimum within the range $0\leq s \leq r \leq 1$.

Figure \ref{fig: three_mini_plots} shows $\xi_{\alpha, \lambda}$ for the case $\alpha = 2$ and different values of $\lambda$.

Clearly, the function $\xi_{\alpha, \lambda}$ is symmetric under the interchange $(r,s) \leftrightarrow (1-r,1-s)$. Since we consider the range $0 \leq s \leq r \leq 1$ in the optimization, it suffices to consider the lower right triangles in Figure \ref{fig: three_mini_plots}. Note that in Figure \ref{fig: three_mini_plots}, the minimum wanders to the $(r=1)$-boundary of the triangle and beyond as $\lambda$ grows. However, because we restrict the domain to $r,s \in [0,1]$, the effective local minimum lies at the boundary for sufficiently large $\lambda$. We therefore study two cases: 
\begin{enumerate}
    \item the minimum lies at the $(r=1)$-boundary,
    \item the minimum lies within the lower right triangle rather than at the boundary.
\end{enumerate}

\subsubsection{Minimum at the Boundary}
In this first case, we know that the position of the minimum $(r^*, s^*)$ lies at the $(r=1)$-boundary, i.e.\ we know that $r^* = 1$. The value of $s^*$ is found by solving
\begin{equation}
    \frac{\partial \xi_{\alpha, \lambda} (r=1,s)}{\partial s} \overset{!}{=} 0,
\end{equation}
yielding 
\begin{equation}
    (r^*, s^*) = \of*{1, \tfrac{1}{\lambda \ln(2)}}.
\end{equation}
Consequently, our linear bound is
\begin{equation}
    L_{D_\alpha}(\lambda) = \xi_{\alpha, \lambda} (r^*, s^*) = \frac{1}{\ln(2)} - \lambda + \log( \lambda \ln(2)).
\end{equation}

Naturally, the domain of this bound is not the whole positive real axis, $\lambda \in [0, \infty ]$, but rather a right-side section of it, $\lambda \in [\lambda_\text{crit},\infty]$. The critical intersection point $\lambda_\text{crit}$ is found by determining the point at which the global minimum lies exactly at the boundary, in which case the $r$-derivative vanishes in addition to the $s$-derivative, which is already zero. We find
\begin{align}
    \left.\frac{\partial \xi_{\alpha, \lambda}(r,s)}{\partial r}\right\vert_{s=s^*, r=1} \overset{!}{=} 0 \\
    \Rightarrow \lambda_\text{crit} =  \frac{\alpha}{(\alpha - 1) \ln(2)}.
\end{align}

The convex lower bound $B_{D_\alpha}$ is then found by solving the minimization problem \eqref{eq: definition B_D}. Via
\begin{equation}
    \frac{\partial}{\partial \lambda}(L_{D_\alpha}(\lambda) + \lambda T) \overset{!}{=} 0 
\end{equation}
we find
\begin{equation}
    \lambda^* = \frac{1}{(1-T)\ln(2)},
\end{equation}
an therefore
\begin{align}
    B_{D_\alpha}(T) &= L_{D_\alpha}(\lambda^*) + \lambda^* T \nonumber \\
        &= \log\of*{\frac{1}{1-T}},
\end{align}
for $T \in [\frac{1}{\alpha}, 1]$ and $\alpha \in [1,\infty]$.

\subsubsection{Minimum not at the Boundary}
In this case, there is no simplification to the minimization problem similar to that of the first case. Unfortunately, we do not think that this problem is solvable analytically for general $\alpha \in [1,\infty]$ as it involves finding the roots of an exponential polynomial \cite{bell1934exponential}. There is, however, a special case in which it is solvable, namely for $\alpha = 2$, i.e.\  for the collision divergence.

For the collision divergence, we obtain the linear bound
\begin{equation}
    L_{D_2} (\lambda) = \xi_{2, \lambda} (r^*,s^*) \qquad \text{for } 0\leq \lambda \leq \frac{2}{\ln(2)},
\end{equation}
by calculating
\begin{equation}
    \vec{\nabla} \xi_{2, \lambda} (r,s) \overset{!}{=} 0,
\end{equation}
determining the position of the minimum to be at
\begin{equation}
    (r^*, s^*) = \bigg(\frac{2+\lambda \ln(2) - \sqrt{4 - \lambda^2 \ln(2)^2}}{\lambda \ln(4)},\frac{1}{2}\bigg).
\end{equation}
Thus, the linear lower bound for the range ${\lambda \in[0,\frac{2}{\ln(2)}]}$ is 
\begin{alignat}{1}
    L_{D_2} (\lambda) = \frac{1}{\ln(4)}\bigg(&-2-\sqrt{4-\lambda^2\ln(2)^2} \nonumber\\ 
    &- 2 \ln\of*{\tfrac{2-\sqrt{4-\lambda^2\ln(2)^2}}{4}}\bigg).
\end{alignat}

The convex lower bound then is found similarly to how it was in the first case. We find
\begin{align}
    B_{D_2} (T) &= \sup_{0 \leq \lambda \leq \frac{2}{\ln(2)}} \sof{L_{D_2}(\lambda) + \lambda T} \nonumber \\
    &= \log(1+4 T^2) \qquad \text{for } 0\leq T \leq \frac{1}{2}.
\end{align}

\subsection{Example: Smoothed Quantum Max Divergence}\label{sec: example smoothed}
To demonstrate how to calculate Pinsker-type bounds for smoothed divergences using Theorem \ref{thm: smoothed bounds}, we look at the example of the smoothed max divergence, which meets all requirements for Theorem \ref{thm: smoothed bounds}, see \cite{Tomamichel_2016, Anshu_2019}. We have already calculated the linear and convex lower bounds for the quantum max divergence in Section \ref{sec: Example Renyi divergence}:
\begin{align}
    L_{D_\infty}(\lambda) &= \frac{1}{\ln(2)} - \lambda + \log( \lambda \ln(2))\\
    B_{D_\infty}(T) &= \log\of*{\frac{1}{1-T}}.
\end{align}
Using Theorem \ref{thm: smoothed bounds}, we easily obtain the bounds for the $\varepsilon$-smoothed quantum max divergence as shifted versions of the unsmoothed bounds:
\begin{align}
    L_{D^\varepsilon_\infty}(\lambda) &= \begin{cases}\displaystyle
            0 &  \lambda \in [0, \lambda_\varepsilon], \\
            \tfrac{1}{\ln(2)} - \lambda + \log( \lambda \ln(2)) - \lambda \varepsilon  &\lambda \in (\lambda_\varepsilon, \tilde\lambda],
        \end{cases}\\
    B_{D^\varepsilon_\infty}(T) &= \begin{cases}
            0 & 0 \leq T \leq \varepsilon, \\ \displaystyle
            \log\of*{\tfrac{1}{1-(T-\varepsilon)}} &  \varepsilon < T \leq 1,
        \end{cases}
\end{align}
where $\lambda_\varepsilon = \frac{1}{\ln(2)(1-\varepsilon)}$ and $\tilde\lambda = \tfrac{1}{\ln(2)\varepsilon}$.

Figure \ref{fig: convex bound smoothed divergence} shows the convex lower bound for an exemplary $\varepsilon = 0.2$, comparing it to the bound of the unsmoothed divergence. To find the bound of the smoothed version, the bound of the unsmoothed version is shifted to the right by $\varepsilon$ and set to zero in the range $T \in [0,\varepsilon]$. This is because divergences are positive and within this range, the smoothing ball around $\rho$ includes $\sigma$, causing the minimum divergence value to be zero.

Furthermore, note that unlike convex lower bounds of unsmoothed divergences, the bound for the smoothed divergence does not diverge at $T=1$. This can be easily understood on the Bloch sphere. There, trace distances close to one correspond to states that lie close to opposite poles, in our case on the $z$-axis. Smoothing allows one of the states to vary slightly along the axis within the smoothing ball. Consequently, the divergence can be finite at $T=1$ because it is effectively computed on states that lie closer together in trace distance than $T=1$.

\subsection{Parametrized Bounds}\label{sec: parametrized bound}
For the Umegaki divergence $D$ the linear bound can be found analytically, however, to our knowledge, the convex bound can not. To still give the bound in an analytical form, we can understand it as a parametrized curve in the $TD$-plane instead of as a function of the trace distance. This strategy was pursued in an earlier work by \citeauthor{Fedotov} in \cite{Fedotov}, though using the natural logarithm for the Umegaki divergence. Since we use base-2-logarithm throughout, we give the corresponding parametrized convex lower bound here fore completeness:
\begin{align}
    T(t) &= \frac{(2^t-1-\ln(2^t))(1-2^t+2^t\ln(2^t))}{(2^t-1)^2\ln(2^t)} \\
    D(t) &=\frac{1-2^t+2^t\ln(2^t)}{(2^t-1)^2}\log\of*{\tfrac{t}{2^t-1}}\\
     &\ \ \ \ \  -2^t\frac{1-2^t+\ln(2^t)}{(2^t-1)^2} \log\of*{\tfrac{2^t t}{2^t-1}} + \log(\ln(2))\nonumber
\end{align}
for $t \geq 0$. Both the derivation for the linear as well as for the parametrized convex lower bound can be found in Appendix \ref{Appendix: relative entropy}.

\subsection{Numerical Bounds}\label{sec: numerics}
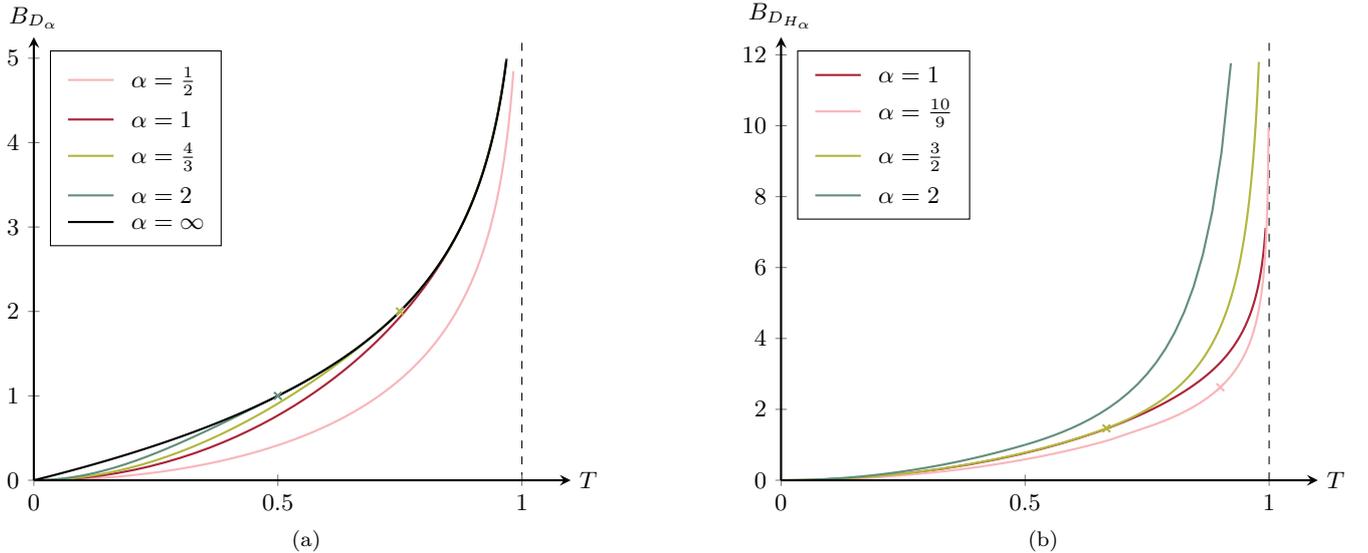
\begin{figure*}
\subfloat[]{\begin{tikzpicture}
    \begin{axis}[
            width=246pt,
            restrict y to domain=0:5,
            xmax=1.1,
            ymax=5.25,
            axis lines=left,
            axis line style=thick,
            compat=newest,
            xlabel=$T$, xlabel style={at={(1,0)}, anchor=west},
            ylabel=$B_{D_\alpha}$, ylabel style={at={(0,1)}, anchor=south, rotate=-90},
            xtick={0,0.5,1},
            axis on top
            ]
        \draw[dashed] (1,0) -- (1,20);
        \addplot[domain=0:1, CorinthianPink, thick, samples=400] {log2(1/(1 - x^2))} node[name=B_D]{};\label{plot:fidelity}
        \addplot[domain=0:100, samples=500, thick, color2](
          {((2^x - 1 - x * ln(2)) * (1 - 2^x + 2^x * ln(2^x)))) / ((2^x - 1)^2 * x * ln(2))},
          {(1 - 2^x + 2^x * ln(2^x)) / (2^x - 1)^2 * ln( (x) / (2^x - 1)  )/ln(2) - 2^x * (1 - 2^x + ln(2^x)) / (2^x - 1)^2 * ln((2^x * x)/(2^x - 1)) / ln(2) + ln(ln(2)) / ln(2)}
        );\label{plot:1renyi}
        \addplot [thick, CitronYellow] table {bound_renyi_43.csv};\label{plot:43renyi}
        \addplot[domain=0:0.5, color1, thick]{log2(1 + 4 *x^2)};
        \addplot[domain=0.5:1, color1, thick, samples=400] {log2(1/(1 - x))} node[name=B_D]{};\label{plot:2renyi}
        \addplot[domain=0:1, black, thick, samples=600] {log2(1/(1 - x))};\label{plot:infinityrenyi}
        \addplot[mark=x, color1, thick] coordinates{(0.5,1)};
        \addplot[mark=x, CitronYellow, thick] coordinates{(0.75,2)};
        \coordinate (legend) at (axis description cs:0.03,0.97);
    \end{axis}
    \matrix [draw, matrix of nodes, anchor=north west, fill=white, column 2/.style={anchor=base west}, ampersand replacement=\&] at (legend) {
        \ref{plot:fidelity} \& $\alpha=\frac{1}{2}$ \\
        \ref{plot:1renyi} \& $\alpha=1$ \\
        \ref{plot:43renyi} \& $\alpha=\frac{4}{3}$ \\
        \ref{plot:2renyi} \& $\alpha=2$ \\
        \ref{plot:infinityrenyi} \& $\alpha=\infty$ \\
    };
\end{tikzpicture}
}\hfill
\subfloat[]{
\begin{tikzpicture}
    \begin{axis}[
            width=246pt,
            restrict y to domain=0:12,
            xmax=1.1,
            ymax=12.5,
            axis lines=left,
            axis line style=thick,
            compat=newest,
            xlabel=$T$, xlabel style={at={(1,0)}, anchor=west},
            ylabel=$B_{D_{H_\alpha}}$, ylabel style={at={(0,1)}, anchor=south, rotate=-90},
            xtick={0,0.5,1},
            axis on top
            ]
        \draw[dashed] (1,0) -- (1,20);
        \addplot[domain=0:200, samples=500, thick, color2](
          {((2^x - 1 - x * ln(2)) * (1 - 2^x + 2^x * ln(2^x)))) / ((2^x - 1)^2 * x * ln(2))},
          {(1 - 2^x + 2^x * ln(2^x)) / (2^x - 1)^2 * ln( (x) / (2^x - 1)  )/ln(2) - 2^x * (1 - 2^x + ln(2^x)) / (2^x - 1)^2 * ln((2^x * x)/(2^x - 1)) / ln(2) + ln(ln(2)) / ln(2)}
        );\label{plot:hellinger1}
        \addplot [thick, CorinthianPink] table {bound_hellinger_109.csv};\label{plot:hellinger109}
        \addplot[mark=x, CorinthianPink, thick] coordinates{({9/10},{-9*(1-(10)^(1/9)})};
        \addplot [thick, CitronYellow] table {bound_hellinger_15.csv};\label{plot:hellinger15}
        \addplot[mark=x, CitronYellow, thick] coordinates{({2/3},{-2*(1-sqrt(3))})};
        a = 2:
        \addplot[domain=0:0.5, color1, thick, name path=left bound] {4*x^2};\label{plot:2}
        \addplot[domain=0.5:{24/25}, color1, thick, name path=right bound] {x / (1-x)} node[name=B_D]{};
        \coordinate (legend) at (axis description cs:0.03,0.97);
    \end{axis}
    \matrix [draw, matrix of nodes, anchor=north west, fill=white, column 2/.style={anchor=base west}, ampersand replacement=\&] at (legend) {
    \ref{plot:hellinger1} \& $\alpha=1$ \\
    \ref{plot:hellinger109}\& $\alpha=\frac{10}{9}$ \\
    \ref{plot:hellinger15}\& $\alpha=\frac{3}{2}$ \\
    \ref{plot:2}\& $\alpha=2$ \\
    };
\end{tikzpicture}
}
\caption{\label{fig: numerical convex bounds}Convex bounds for Rényi divergences (a) and Hellinger divergences (b) each for a selection of $\alpha$. The marked points indicate where the right-side bounds and the left-side bounds meet. Remarkably, the bound for $\alpha=1$ is equal in both cases. Further, note that $B_{D_\frac{1}{2}} \leq B_{D_1} \leq B_{D_\frac{4}{3}} \leq B_{D_2} \leq B_{D_\infty}$.}
\end{figure*}

For the quantum Rényi divergence with general $\alpha$ and the Hellinger divergence with general $\alpha$, we cannot expect to find both the linear and the convex lower bounds analytically and therefore go about finding the bounds by solving both optimization steps numerically. For the first optimization, i.e.\ \eqref{eq: definition L_D}, we use scipy's minimize routine \cite{2020SciPy-NMeth} with the Nelder-Mead method, which is expected to reliably find the minimum \cite{audet_2017}. The second optimization, i.e.\ \eqref{eq: definition B_D}, is also obtained using scipy, the results of which are shown in Figure \ref{fig: numerical convex bounds}. From the points we obtain numerically, one can easily construct a convex bound in the form of a polygonal chain as depicted in Figure \ref{fig: polygonal chain}. 

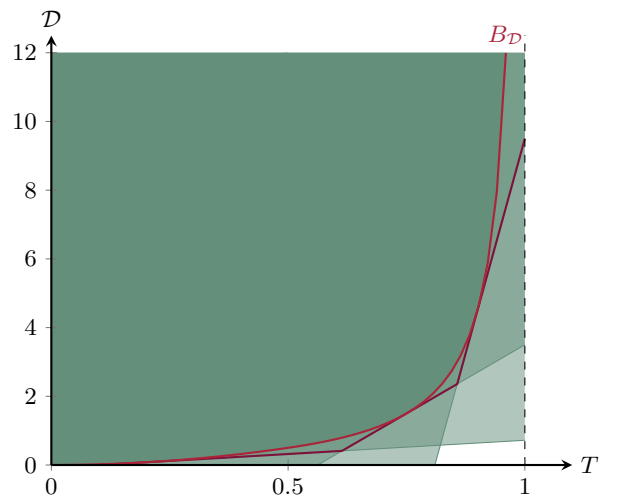
\begin{figure}
    \centering
    \begin{tikzpicture}
        \begin{axis}[
        restrict y to domain=0:12,
        xmax=1.1,
        ymax=12.5,
        axis lines=left,
        axis line style=thick,
        compat=newest,
        xlabel=$T$, xlabel style={at={(1,0)}, anchor=west},
        ylabel=$\mcd$, ylabel style={at={(0,1)}, anchor=south, rotate=-90},
        xtick={0,1/2,1},
        axis on top
        ]
        \addplot[domain=0.1:1, color1, name path=tangent 1] {0.8*(x - 0.2) + 0.08};
        \addplot[domain=0.5625:1, color1, name path=tangent 2] {8*(x - 0.75) + 1.5};
        \addplot[domain=0.81:1, color1, name path=tangent 3] {50*(x - 0.9) + 4.5};
        \addplot[black, thick, color3] table {
        0.1 0
        0.613889 0.411111
        0.857143 2.35714
        1 9.5
        };
        \draw[dashed] (1,0) -- (1,20);
        \addplot[domain=0:0.5, color2, thick, name path=left bound] {2*x^2};
        \addplot[domain=0.5:{24/25}, color2, thick, name path=right bound] {x / (2-2*x)} node[name=B_D]{};
        \path[name path=clippath] (0,0)--(0,12) -- (1,12);
        \addplot[color1] fill between [of=left bound and clippath, soft clip={domain=0:0.5}];
        \addplot[color1] fill between [of=right bound and clippath, soft clip={domain=0.49:1}];
        \addplot[color1, semitransparent] fill between [of=tangent 1 and clippath];
        \addplot[color1, semitransparent] fill between [of=tangent 2 and clippath];
        \addplot[color1, semitransparent] fill between [of=tangent 3 and clippath];
        \end{axis}
    \draw[color2] (B_D) node[anchor=south] {$B_\mcd$};
    \end{tikzpicture}
    \caption{Polygonal chain as convex lower bound. The polygonal chain acts as an approximation to the convex lower bound $B_\mcd$ and is made up of the tangents from the linear lower bound $L_\mcd$.}
    \label{fig: polygonal chain}
\end{figure}

\subsection{Other Bounds in the Literature}
We are certainly not the first to think about Pinsker-type bounds and we want to credit some of the previous results on this topic. \citeauthor{Fedotov} give a parametrized bound for the Umegaki divergence base e and an improved Pinsker inequality up (oddly) large order in \cite{Fedotov}. \citeauthor{lanier2025} obtain the same convex bound we do for $\chi^2$ in \cite{lanier2025}. Moreover, a selection of classical Pinsker-type bounds can be found in \cite{Polyanskiy_Wu_2025}. \citeauthor{Rioul_2024} gives an overview on classical Pinsker-type bounds in \cite{Rioul_2024}.

\section{summary and Outlook}\label{sec: summary and outlook}
The methods developed in this work give a systematic, surprisingly simple, and straightforward path to computing optimal Pinsker-type inequalities for arbitrary divergences. Following our two-step process, tight bounds for a given divergence can be obtained by minimizing a two-parameter objective over the unit square. While this optimization is numerically easy, deriving a closed analytic expression can range from easy to notoriously hard, and in some cases may be out of reach.

We applied our methods to a long list of divergences, listed in Table \ref{tab: results}, including many examples that are central to quantum information theory. This is, however, only a beginning. Extending the catalogue to further divergences is a natural and meaningful task for future work.
Beyond the catalogue aspect, we expect the bounds obtained here to be useful as a technical tool in many situations. Replacing an operationally meaningful divergence by an easier-to-handle quantity, such as the trace distance, is a standard step in many proofs, especially for smoothed quantities, which are known to be notoriously hard to handle in quantitative statements. 

For this reason, we are convinced that a wide range of applications of these bounds will follow.
In particular, several directions immediately seem promising:
\begin{itemize}
    \item For \emph{QKD and randomness extraction} smoothed divergences play a central role in the finite-size analysis \cite{tomamichel2011leftover}. Here, our bounds may simplify the estimation of attainable rates. Especially since the trace distance itself can be lower bounded by observed statistical data. 
    
    \item For \emph{efficient quantum Gibbs sampling}, recent work \cite{chen2023efficient} suggests that the mixing-time factor is a tight bottleneck for the overall runtime. Bounds on these quantities can be obtained via Logarithmic Sobolev inequalities, which in turn can be made numerically tangible by applying a Pinsker-type inequality. 

    \item \emph{Entropic uncertainty relations} are a central tool used throughout quantum physics \cite{schwonnek2018additivity,schwonnek2018uncertainty,rotundo2024entropic}. Despite their operational importance, their numerical estimation can often be a difficult problem. Replacing entropic quantities with trace distances via optimal bounds is hence a potentially fruitful route to take. 
\end{itemize}

\subsection*{Acknowledgements}
G.K.\ acknowledges support from the Excellence Cluster -- Matter and Light for Quantum Computing (ML4Q) and funding by the European Research Council (ERC Grant Agreement No.\ 948139). R.S.\ is supported  by the DFG under Germany's Excellence Strategy - EXC-2123 QuantumFrontiers - 390837967 and  SFB 1227 (DQ-mat), the Quantum Valley Lower Saxony, and the BMBF projects CBQD, SEQUIN, Quics and ATIQ.

\bibliography{bibliography.bib}

\appendix

\section{Proof of Theorem \ref{thm: main theorem}}\label{Appendix: Proof of main theorem}
\begin{proof}
We begin by proving that 
\begin{align}
   &L_{\mcd}(\lambda)=\inf_{r,s} \sof{\mcd_\textup{bin}(\midmid{r}{s})- \lambda (r-s) }\quad \\ &\operatorname{s.th. } 0\leq s\leq r\leq 1.\nonumber
\end{align}

Let $\rho, \sigma \in \mathcal{S}(\mathcal{H})$ be two states. Through a suitable choice of basis we can bring $\rho - \sigma$ into the block form
    \begin{equation}
        \rho - \sigma = \of*{\begin{array}{c|c}
            (\rho - \sigma)^+ & 0 \\
            \hline
            0 & (\rho - \sigma)^-
        \end{array}},
    \end{equation}
    where $( \rho - \sigma)^\pm$ denotes the positive and negative part of $\rho - \sigma$, respectively. Without loss of generality we include the zero part in $( \rho - \sigma)^+$.

    For every pair of $\rho$ and $\sigma$ we define the \emph{classicalization channel} $\mathcal{C}_{\rho, \sigma}$ to be the channel that takes the trace on each of the two blocks on the main diagonal of $\rho - \sigma$, i. e.
    \begin{equation}
        \mathcal{C}_{\rho, \sigma}(\rho - \sigma) \coloneqq \begin{pmatrix}
            \tr((\rho - \sigma)^+) & 0 \\
            0 & \tr((\rho - \sigma)^-)
        \end{pmatrix} .
    \end{equation}
    Effectively, $\mathcal{C}_{\rho, \sigma}$ reduces $\rho - \sigma$ to a two-dimensional classical form, which we may write as
    \begin{align}
        T_{\rho, \sigma}(\rho - \sigma) &= \begin{pmatrix}
            r - s & 0 \\
            0 & (1 - r) - (1 - s)
        \end{pmatrix} \\
        &= \begin{pmatrix}
            r - s & 0 \\
            0 & s-r
        \end{pmatrix},
    \end{align}
    where $0 \leq s \leq r \leq 1$ because the upper left part is positive by our choice of basis and the trace of $\rho - \sigma$ vanishes. For $\rho$ and $\sigma$ separately we may write
    \begin{equation}
        \mathcal{C}_{\rho, \sigma} (\rho) = \begin{pmatrix}
            r & 0 \\
            0 & 1-r
        \end{pmatrix}, \quad
        \mathcal{C}_{\rho, \sigma} (\sigma) = \begin{pmatrix}
            s & 0 \\
            0 & 1-s
        \end{pmatrix}.
    \end{equation}
    This channel conserves the trace distance, 
    \begin{equation}\label{eq: conservation trace distance}
        T(\rho, \sigma ) = T (\mathcal{C}_{\rho, \sigma} (\rho), \mathcal{C}_{\rho, \sigma} (\sigma) ) ,
    \end{equation}
    because $\Vert \rho - \sigma\Vert_1$ is the sum of the absolute value of all eigenvalues of $\rho - \sigma$, which is the same as the sum $\abs{\tr((\rho - \sigma)^+)} + \abs{\tr((\rho - \sigma)^-)} = \Vert \mathcal{C}_{\rho, \sigma} (\rho - \sigma)\Vert_1$. 
    
    Using the conservation of the trace distance \eqref{eq: conservation trace distance} and the data processing inequality \eqref{eq: data processing} for $\mathcal{C}_{\rho, \sigma}$, we can simplify the lower linear bound to the desired form:
    \begin{align}
        L_{\mcd}(\lambda) &= \inf_{\rho, \sigma} \sof{\mcd(\midmid{\rho}{\sigma}) - \lambda T (\rho, \sigma)} \\
        &= \inf_{\rho, \sigma} \sof{\mcd(\midmid{\mathcal{C}_{\rho, \sigma}(\rho)}{\mathcal{C}_{\rho, \sigma}(\sigma)}) \\
        &\qquad \quad- \lambda T (\mathcal{C}_{\rho, \sigma}(\rho), \mathcal{C}_{\rho, \sigma}(\sigma) )}\nonumber \\
        &=\inf_{0 \leq s \leq r \leq 1} \sof{\mcd_\text{bin}(\midmid{r}{s})- \lambda (r-s)}.
    \end{align}
    To see the last step, recall the definition of $\mcd_\text{bin}$. This concludes the proof for $L_\mcd$. 
    
    Now, for the optimal convex lower bound recall its definition
    \begin{equation}
    B_\mcd(T (\rho, \sigma)) \coloneqq \sup_\lambda \sof{L_\mcd(\lambda) + \lambda T (\rho, \sigma )}.
    \end{equation}
    We already found that $L_\mcd$ is attained by using two-dimensional classical states. Again, we can use $\mathcal{C}_{\rho, \sigma}$ and \eqref{eq: conservation trace distance} to see that the convex lower bound, too, is attained by two-dimensional classical states.
\end{proof}

\section{Proof of Theorem \ref{thm: smoothed bounds}}\label{Appendix: Proof of smoothed theorem}
\begin{proof}
    The linear lower bound of $\mcd^\varepsilon$ is given by
    \begin{equation}
        L_{\mcd^\varepsilon}(\lambda)=\inf_{\rho, \sigma} \sof{\mcd^\varepsilon(\midmid{\rho}{\sigma})- \lambda T(\rho, \sigma)}.
    \end{equation}
    Following the arguments in the proof of Theorem \ref{thm: main theorem}, we apply the classicalization channel $\mathcal{C}_{\rho, \sigma}$ to $\rho$ and $\sigma$. Using that this channel conserves the trace distance and that the $\varepsilon$-smoothed divergence satisfies the data-processing inequality, we find
    \begin{alignat}{3}
        L_{\mcd^\varepsilon}(\lambda)&= \inf_{\mathcal{C}_{\rho, \sigma}(\rho), \mathcal{C}_{\rho, \sigma}(\sigma)} \sof{&&\mcd^\varepsilon(\midmid{\mathcal{C}_{\rho, \sigma}(\rho)}{\mathcal{C}_{\rho,\sigma}(\sigma)})\nonumber\\
            &&&- \lambda T(\mathcal{C}_{\rho, \sigma}(\rho), \mathcal{C}_{\rho, \sigma}(\sigma))}  \\[5px]
            &= \inf_{\mathcal{C}_{\rho, \sigma}(\rho), \mathcal{C}_{\rho, \sigma}(\sigma)} \sof{&&\inf_{\rho' \in B^\varepsilon(\mathcal{C}_{\rho, \sigma}(\rho))} \sof{\mcd(\midmid{\rho'}{\mathcal{C}_{\rho, \sigma}(\sigma)})} \nonumber\\
            &&&- \lambda T(\mathcal{C}_{\rho, \sigma}(\rho), \mathcal{C}_{\rho, \sigma}(\sigma))}. \label{eq: smoothed1}
    \end{alignat}
    Let us look at the two summands in the outer infimum individually and denote them by $S_1$ and $S_2$, i.\ e.\ 
    \begin{align}
        S_1 &\coloneqq \inf_{\rho' \in B^\varepsilon(\mathcal{C}_{\rho, \sigma}(\rho))} \sof{\mcd(\midmid{\rho'}{\mathcal{C}_{\rho, \sigma}(\sigma)})},\\
        S_2 &\coloneqq - \lambda T(\mathcal{C}_{\rho, \sigma}(\rho), \mathcal{C}_{\rho, \sigma}(\sigma)).
    \end{align}
    
    For the second summand $S_2$, recall that
    \begin{equation}
        \mathcal{C}_{\rho, \sigma} (\rho) = \begin{pmatrix}
            r & 0 \\
            0 & 1-r
        \end{pmatrix}, \quad
        \mathcal{C}_{\rho, \sigma} (\sigma) = \begin{pmatrix}
            s & 0 \\
            0 & 1-s
        \end{pmatrix}.
    \end{equation}
    where $0 \leq s \leq r \leq 1$. Hence, the trace distance of these two states is
    \begin{equation}
        T(\mathcal{C}_{\rho, \sigma}(\rho), \mathcal{C}_{\rho, \sigma}(\sigma)) = r - s
    \end{equation}
    and therefore
    \begin{equation}
        S_2 = - \lambda (r-s).
    \end{equation}
    
    Now, let us take a closer look at the first summand $S_1$. There the infimum is taken over the $\varepsilon$-ball around $\mathcal{C}_{\rho, \sigma}(\rho)$. This can be visualized on the Bloch ball: The qubit $\mathcal{C}_{\rho, \sigma}(\rho)$ corresponds to the Bloch vector 
    \begin{equation}
        \vec{r} = \begin{pmatrix}
            0 \\
            0 \\
            2r-1
        \end{pmatrix},
    \end{equation}
    a vector along the $z$-axis which points down for $r \in [0,\frac{1}{2})$ and up for $r \in (\frac{1}{2},1]$. The $\varepsilon$-ball around $\mathcal{C}_{\rho, \sigma}(\rho)$ corresponds to a $2\varepsilon$ Euclidean ball around this vector:
    \begin{align}
        B^\varepsilon(\mathcal{C}_{\rho ,\sigma}(\rho)) &= \sof{\rho' \in \mathcal{S}(\mathbb{C}^2) \mid T(\rho, \rho') \leq \varepsilon} \nonumber \\
        &= \bigg\{\rho' = \tfrac{1}{2}(\mathds{1} + \pvec{r}'\cdot \vec{\sigma}_\text{Pauli}) \biggm| \begin{alignedat}[]{1}
            & \norm{\vec{r} - \pvec{r}'} \leq 2\varepsilon,\\
            & \norm{\pvec{r}'} \leq 1
        \end{alignedat} \bigg\},
    \end{align}
    using the fact that the trace distance of two qubits is equal to half the Euclidean distance of their Bloch vectors. 

    Can we, without loss of generality, assume that the $\rho'$ in the infimum are diagonal? To answer this question consider the \emph{z-pinching channel} $\mathcal{C}_z:\mathcal{S}(\mathbb{C}^2) \to \mathcal{S}(\mathbb{C}^2)$ given by the Kraus operators
    \begin{equation}
        K_1 = \begin{pmatrix}
            1 & 0 \\
            0 & 0
        \end{pmatrix}, \qquad
        K_2 = \begin{pmatrix}
            0  & 0 \\
            0 & 1
        \end{pmatrix}.
    \end{equation}
    This channel pinches any qubit along the $x$ and $y$ directions, projecting its Bloch vector onto the $z$-axis of the Bloch ball:
    \begin{align}
        \mathcal{C}_z(\rho') &= \mathcal{C}_z\Bigl(\frac{1}{2}\begin{pmatrix}
            1 + z' & x' - \ii y' \\
            x' + \ii y' & 1 - z'
        \end{pmatrix} \Bigr) \nonumber \\
        &= \frac{1}{2}\begin{pmatrix}
            1 + z' & 0\\
            0 & 1 - z'
        \end{pmatrix}.
    \end{align}
    Applied to the $\varepsilon$-smoothing ball $B^\varepsilon(\mathcal{C}_{\rho, \sigma}(\rho))$, the $z$-pinching channel shrinks the ball to the $z'$-interval,
    \begin{equation}
        \mathcal{C}_z(B^\varepsilon(\mathcal{C}_{\rho ,\sigma}(\rho))) = [2r - 1 - 2\varepsilon, 2r - 1 + 2\varepsilon].
    \end{equation} 

    Noting that the $z$-pinching channel leaves $\mathcal{C}_{\rho ,\sigma}(\sigma)$ invariant and that
    \begin{equation}
        \mathcal{C}_z(B^\varepsilon(\mathcal{C}_{\rho ,\sigma}(\rho))) \subset B^\varepsilon(\mathcal{C}_{\rho, \sigma}(\rho)),
    \end{equation}
    we can use the data-processing inequality to find
    \begin{align}
        S_1 &= \inf_{\rho' \in B^\varepsilon(\mathcal{C}_{\rho ,\sigma}(\rho))} \sof{\mcd(\midmid{\mathcal{C}_z(\rho')}{\mathcal{C}_z(\mathcal{C}_{\rho ,\sigma}(\sigma)}))} \nonumber \\
        &= \inf_{\rho' \in \mathcal{C}_z(B^\varepsilon(\mathcal{C}_{\rho ,\sigma}(\rho)))} \sof{\mcd(\midmid{\rho'}{\mathcal{C}_{\rho ,\sigma}(\sigma)})},
    \end{align}
    meaning that we can indeed assume that $\rho'$ is diagonal and write it as
    \begin{equation}
        \rho' = \begin{pmatrix}
            r' & 0 \\
            0 &1-r'
        \end{pmatrix}
    \end{equation}
    with $r' \in [r - \varepsilon, r + \varepsilon]\cap[0,1]$. So far we have
    \begin{equation}
        S_1 = \inf_{r'} \sof*{\mcd\of*{\begin{pmatrix}
            r' & 0 \\
            0 &1-r'
        \end{pmatrix}\bigg\Vert\begin{pmatrix}
            s & 0 \\
            0 &1-s
        \end{pmatrix}}}.
    \end{equation}

    There are two cases to consider for the infimum: $\abs{r - s } \leq \varepsilon$ and $\abs{r - s} > \varepsilon$. In the first case the infimum is clearly attained at $r=s$ or rather $\rho' = \mathcal{C}_{\rho, \sigma}(\sigma)$ because $\mcd$ vanishes whenever the two states coincide and since it is non-negative, zero is the smallest value it can reach. In the second case we note that $\mcd$ is convex in the first argument\footnote{Most divergences are jointly convex in both arguments. Joint convexity implies convexity in each argument \cite{Boyd_2004}.} and together with non-negativity we find that $\mcd$ is monotonically non-decreasing in the first argument. Therefore, the infimum is attained at the smallest possible value for $r'$, which is $r' = r - \varepsilon$. Together we have found 
    \begin{equation}
        S_1 = \begin{cases}
            0 &\text{for } \abs{r-s} \leq \varepsilon, \\
            \mcd_\text{bin}(\midmid{r-\varepsilon}{s}) &\text{for }\abs{r-s} > \varepsilon.
        \end{cases}
    \end{equation}

    With these two terms at hand, we can now evaluate the outer infimum in \eqref{eq: smoothed1} to determine the linear lower bound. In the case $\abs{r - s } \leq \varepsilon$ the linear bound vanishes and so we focus on the case $\abs{r - s} > \varepsilon$. Let the transition point between the two cases be $\lambda_\varepsilon$, which is found by evaluating the derivative of the convex bound at $\varepsilon$. For $\lambda > \lambda_\varepsilon$ we have so far obtained
    \begin{equation}
        L_{\mcd^\varepsilon}(\lambda)= \inf_{\substack{0 \leq s \leq r \leq 1\\ \abs{r - s} > \varepsilon}} \sof{\mcd_\text{bin}(\midmid{r-\varepsilon}{s})- \lambda (r - s)}. 
    \end{equation}
    Substituting $r' = r - \varepsilon$ we find
    \begin{equation}
        L_{\mcd^\varepsilon}(\lambda)= \inf_{0 \leq s \leq r' \leq 1} \sof{\mcd_\text{bin}(\midmid{r'}{s})- \lambda (r' - s)} - \lambda \varepsilon,
    \end{equation}
    which is just the linear bound for the unsmoothed divergence shifted by $- \lambda \varepsilon$, namely
    \begin{equation}
        L_{\mcd^\varepsilon}(\lambda)= L_\mcd(\lambda) - \lambda \varepsilon. 
    \end{equation}
    Together, the linear lower bound is
    \begin{equation}
        L_{\mcd^\varepsilon}(\lambda)= \begin{cases}
            0 & \text{for } 0 \leq \lambda \leq \lambda_\varepsilon, \\
            L_\mcd(\lambda) - \lambda \varepsilon &\text{for } \lambda_\varepsilon < \lambda \leq \lambda_\text{max},
        \end{cases}
    \end{equation}
    where $\lambda_\varepsilon = \left. \frac{\dd B_\mcd}{\dd T}\right\vert_{T=\varepsilon}$ and $\lambda_\text{max} = \left.\frac{\dd B_\mcd}{\dd T}\right\vert_{T=1-\varepsilon}$.
    
    Finding the convex lower bound is simpler than finding the linear bound because we can build on the results obtained so far. The convex lower bound is found through evaluating
    \begin{equation}
        B_{\mcd^\varepsilon} = \sup_\lambda \sof{L_{\mcd^\varepsilon} + \lambda T}.
    \end{equation}
    In the case $\abs{r - s } = T \leq \varepsilon$, the convex bound is zero because in that case, the minimal smoothed divergence is attained where the two states are equal. In the case $\abs{r - s} = T > \varepsilon$, we find
    \begin{align}
        B_{\mcd^\varepsilon} &= \sup_\lambda \sof{L_{\mcd^\varepsilon} + \lambda T} \nonumber \\
        &= \sup_\lambda \sof{L_{\mcd} - \lambda \varepsilon + \lambda T} \nonumber \\
        &= \sup_\lambda \sof{L_{\mcd} + \lambda( T -\varepsilon )} \nonumber \\
        &= B_\mcd(T-\varepsilon).
    \end{align}
    Thus, the convex bound of the $\varepsilon$-smoothed divergence is just the convex bound of the unsmoothed divergence shifted by $\varepsilon$ and set to zero where the convex bound of the unsmoothed divergence is negative
    \begin{equation}
        B_{\mcd^\varepsilon}(T)= \begin{cases}
            0 & \text{for } 0 \leq T \leq \varepsilon,\\
            B_\mcd(T - \varepsilon) &\text{for }\varepsilon < T \leq 1.
        \end{cases}
    \end{equation}
\end{proof}

\section{Calculation of analytical bounds}
We discussed the calculation of both the linear and the convex bound for the Rényi divergence for $\alpha \in [1,\infty]$ in length in Section \ref{sec: Example Renyi divergence}. The other analytical bounds in Table \ref{tab: results} are obtained in a similar fashion. Here we give the important steps to obtain these bounds.

\subsection{Hellinger Divergence}
The Hellinger divergence of order $\alpha$ is the $f$-divergence for $f_\alpha = \frac{x^\alpha - 1}{\alpha - 1}$. Its binary version is 
\begin{equation}
    D_{H_\alpha, \text{bin}} (\midmid{r}{s}) = \tfrac{1}{\alpha - 1 }\of*{r^\alpha s^{1-\alpha} + (1-r)^\alpha (1-s)^{1-\alpha}-1}.
\end{equation}
Note that the binary versions of the Hellinger divergence of order $\alpha$ and the Rényi divergence of order $\alpha$ are related \cite{Beigi_2025} by
\begin{equation}
    D_{\alpha, \text{bin}} (\midmid{r}{s}) =\tfrac{1}{\alpha - 1}\log\of*{1 + (\alpha - 1 )D_{H_\alpha, \text{bin}} (\midmid{r}{s})}.
\end{equation}

To find the linear bound $L_{D_{H_\alpha}}$, we need to solve the optimization problem
\begin{align}
       &L_{D_{H_\alpha}}(\lambda)=\inf_{r,s} \sof{D_{H_\alpha, \text{bin}}(\midmid{r}{s})-\lambda (r-s) }\quad \\ &\operatorname{s.th. } 0\leq s \leq r \leq 1.\nonumber
\end{align}
We set
\begin{equation}
    \xi_\lambda (r,s)\coloneqq D_{H_\alpha, \text{bin}}(\midmid{r}{s})- \lambda \abs{r-s}.
\end{equation}
As before in the calculation for the bounds of the collision entropy in Section \ref{sec: Example Renyi divergence}, there are two cases to be  considered. In the case where the minimum is not at the boundary, we suspect that the bounds cannot be calculated analytically and instead calculate them numerically, see Section \ref{sec: numerics}. In the case where the minimum is at the boundary, we can calculate the bounds analytically. To do so, we solve
\begin{equation}
    \frac{\partial \xi_{\lambda} (r=1,s)}{\partial s} \overset{!}{=} 0
\end{equation}
to find the position of the minimum at
\begin{equation}
    (r^*, s^*) = \of*{1, \lambda^{-\frac{1}{\alpha}}}.
\end{equation}
Hence, our linear bound is
\begin{align}
    L_{D_{H_\alpha}}(\lambda) &= \xi_{\lambda} (r^*, s^*) \nonumber \\
    &= \frac{1}{1-\alpha}\of*{1+\lambda(\alpha - 1 - \alpha \lambda^{-\frac{1}{\alpha}})}.
\end{align}
This bound is valid for $\lambda \geq \lambda_\text{crit}$, where $\lambda_\text{crit} = \of*{\frac{\alpha}{\alpha-1}}^\alpha$ is found by solving
\begin{equation}
    \frac{\partial \xi_{\alpha, \lambda}(r,s)}{\partial r}\rvert_{s=s^*, r=1} \overset{!}{=} 0.
\end{equation}

The convex bound $B_{D_{H_\alpha}}$ is then obtained by solving 
\begin{equation}
    B_{D_{H_\alpha}}(T) = \sup_\lambda \sof{L_{D_{H_\alpha}}(\lambda) + \lambda T},
\end{equation}
which is done through
\begin{equation}
    \frac{\partial}{\partial \lambda}(L_{D_{H_\alpha}}(\lambda) + \lambda T) \overset{!}{=} 0,
\end{equation}
where we find
\begin{equation}
    \lambda^* = \of*{\frac{1}{1-T}}^\alpha
\end{equation}
and hence
\begin{align}
    B_{D_{H_\alpha}}(T) &= L_{D_{H_\alpha}}(\lambda^*) + \lambda^* T \nonumber \\
        &= \frac{1-(1-T)^{1-\alpha}}{1-\alpha}
\end{align}
for $T \in [\frac{1}{\alpha}, 1]$.

Note that for the case $\alpha = 1$, one has to consider the limit $\alpha \to 1$.

\subsection{Neyman-\texorpdfstring{$\chi^2$}{chi squared}}
The Neyman-$\chi^2$ is the $f$-divergence for $f(x) = (x-1)^2$. Its binary version is
\begin{equation}
    \chi_{N, \text{bin}}^2 (\midmid{r}{s}) = \frac{(r-s)^2}{s(1-s)}.
\end{equation}
To find the linear bound $L_{\chi_N^2}$, we need to solve the optimization problem
\begin{align}
       &L_{\chi_N^2}(\lambda)=\inf_{r,s} \sof{\chi_{N, \text{bin}}^2(\midmid{r}{s})-\lambda (r-s) }\quad \\ &\operatorname{s.th. } 0\leq s\leq r \leq 1.\nonumber
\end{align}
Again, like in the case of the Hellinger divergence, we have to consider two cases. The Neyman-$\chi^2$ is equal to the Hellinger divergence of order 2 and we can therefore use the bounds of the Hellinger divergence for the case where the minimum is at the boundary. These are
\begin{align}
    L_{\chi_N^2}(\lambda) &= \frac{1}{1-2}\of*{1+\lambda(2 - 1 - 2\lambda^{-\frac{1}{2}})} \nonumber \\
    &= -(\sqrt{\lambda}-1)^2
\end{align}
for $\lambda \geq 4$ and
\begin{align}
    B_{\chi_N^2}(T) &= \frac{1-(1-T)^{1-2}}{1-2} \nonumber \\
        &= \frac{T}{1-T}
\end{align}
for $T \in [\tfrac{1}{2}, 1]$.

In the case where the minimum is not at the boundary and rather lies inside the lower right triangle, we set
\begin{equation}
    \xi_\lambda (r,s)\coloneqq \chi_{N, \text{bin}}^2(\midmid{r}{s})- \lambda \abs{r-s} 
\end{equation}
and solve
\begin{equation}
    \vec{\nabla} \xi_\lambda \overset{!}{=} 0
\end{equation}
to find the position of the minimum at
\begin{equation}
    (r^*, s^*) = \of*{\frac{1}{2}\of*{1+\frac{\lambda}{4}}, \frac{1}{2}}.
\end{equation}
Hence, our linear bound is
\begin{align}
    L_{\chi_N^2}(\lambda) &= \xi_{\lambda} (r^*, s^*) \nonumber \\
    &=-\frac{\lambda^2}{16}.
\end{align}
This bound is valid for $\lambda \in [0,4]$.

The convex bound $B_{\chi_N^2}$ is found by solving 
\begin{equation}
    \frac{\partial}{\partial \lambda}(L_{\chi_N^2}(\lambda) + \lambda T) \overset{!}{=} 0,
\end{equation}
where we find
\begin{equation}
    \lambda^* = 8T
\end{equation}
and thus
\begin{align}
        B_{\chi_N^2}(T) &= L_{\chi_N^2}(\lambda^*) + \lambda^* T \nonumber \\
        &= 4 T^2,
\end{align}
for $T \in [0, \frac{1}{2}]$.

\subsection{Pearson-\texorpdfstring{$\chi^2$}{chi squared}}
The Pearson-$\chi^2$ is also an $f$-divergence, namely for $f(x) = \frac{(x-1)^2}{x}$. Its binary version is given by
\begin{equation}
    \chi_{P, \text{bin}}^2 (\midmid{r}{s}) = \frac{(r-s)^2}{r(1-r)}.
\end{equation}
Note that $\chi_{P, \text{bin}}^2$ is the same as $\chi_{N, \text{bin}}^2$ but with $r$ and $s$ swapped. Since $L_{\chi_{P}^2}$ is obtained by solving an optimization  problem over both $r$ and $s$, the result for $L_{\chi_{P}^2}$ is the same as for the Neyman-$\chi^2$. Consequently, the convex bound is also the same for both divergences.

\subsection{Fidelity Divergence}\label{Appendix: alpha 1/2}
Recall the binary version of the Rényi divergence
\begin{equation}
    D_{\alpha, \text{bin}} = \frac{1}{\alpha-1}\log(r^\alpha s^{1-\alpha} + (1-r)^\alpha (1-s)^{1-\alpha}).
\end{equation}
For the case $\alpha = \frac{1}{2}$, this simplifies to
\begin{equation}
    D_{\frac{1}{2}, \text{bin}} = -2 \log\of*{\sqrt{rs} + \sqrt{(1-r)(1-s)}}.
\end{equation}

The first step is to find the linear lower bound using Theorem \ref{thm: main theorem}, i.e.\ ,
\begin{align}
       &L_{D_\frac{1}{2}}(\lambda)=\inf_{r,s} \sof{D_{\frac{1}{2},\text{bin}}(\midmid{r}{s})- \lambda (r-s) }\quad \\ &\text{s.th. } 0\leq r,s\leq 1.\nonumber
\end{align}

\begin{figure}
  \centering
  \includegraphics{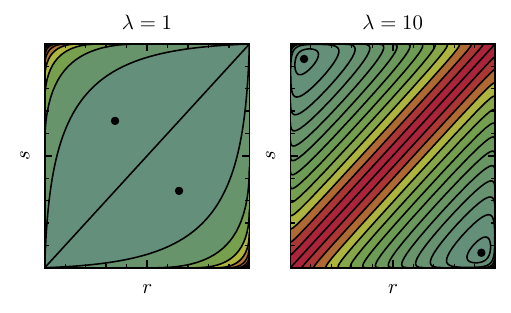}
  \caption{\label{fig: xi_fidelita_divergence}$\xi_{\frac{1}{2}, \lambda}$ on the unit square for different values of $\lambda$. The minima (black dots) lie on the line $s=1-r$.}
\end{figure}

Set
\begin{align}
    \xi_\lambda (r,s) &\coloneqq D_{\frac{1}{2}, \text{bin}}(\midmid{r}{s})- \lambda \abs{r-s} \nonumber \\
    &\phantomcolon = -2 \log\of*{\sqrt{rs} + \sqrt{(1-r)(1-s)}}- \lambda \abs{r-s}.
\end{align}
Note that in addition to the symmetry $(r,s)\mapsto(1-r,1-s)$ that we have by construction, we also have the symmetry $(r,s)\mapsto (1-s,1-r)$. Figure \ref{fig: xi_fidelita_divergence} shows that the minimum lies on the line $s = 1-r$. If we restrict $\xi_\lambda$ to this line, it simplifies to
\begin{equation}
    \xi_\lambda (r,1-r) = \log\of*{\frac{1}{4r(1-r)}} - \lambda \abs{2r-1}.
\end{equation}
We find the minimum by solving
\begin{equation}
    \frac{\partial\xi_\lambda (r,1-r)}{\partial r} \overset{!}{=}0
\end{equation}
to be at
\begin{align}
    r^* &= \tfrac{-2 + \lambda \ln(4) \pm \sqrt{4 + \lambda^2 \ln(4)^2}}{2 \lambda \ln(4)}, \nonumber \\
    s^* &= 1-r^*,
\end{align}
where only the solution with the positive square root is relevant for us because the other solution results in an imaginary bound $B_{D_\frac{1}{2}}$. The linear bound then is
\begin{alignat}{1}
    L_{D_\frac{1}{2}}(\lambda) = \frac{1}{\ln(2)} \bigg(&1-\sqrt{1+\lambda^2\ln(2)^2}\nonumber\\
    &+\ln\of*{\tfrac{1+\sqrt{1+\lambda^2\ln(2)^2}}{2}} \bigg).
\end{alignat}

The second step is to obtain the convex lower bound given through
\begin{equation}
    B_{D_\frac{1}{2}}(T) = \sup_\lambda \sof{L_{D_{\frac{1}{2}}}(\lambda) + \lambda T}.
\end{equation}
Solving
\begin{equation}
    \frac{\partial}{\partial \lambda}(L_{D_\frac{1}{2}}(\lambda) + \lambda T) \overset{!}{=} 0,
\end{equation}
we find
\begin{equation}
    \lambda^* = \frac{2T}{\ln(2)(1-T^2)}
\end{equation}
and hence
\begin{equation}
    B_{D_\frac{1}{2}}(T) = \log\of*{\frac{1}{1-T^2}},
\end{equation}
for all $T \in [0,1]$.

\subsection{Umegaki Divergence}\label{Appendix: relative entropy}
The binary version of the Umegaki divergence is 
\begin{equation}
    D_\text{bin}(\midmid{r}{s})= \log\of*{\of*{\frac{r}{s}}^r \Bigl(\frac{1-r}{1-s}\Bigr)^{1-r}}.
\end{equation}
For the Umegaki divergence, the minimum 
\begin{align}
       &L_{D}(\lambda)=\inf_{r,s} \sof{D_{\text{bin}}(\midmid{r}{s})- \lambda (r-s) }\quad \\ &\operatorname{s.th. } 0\leq r,s\leq 1\nonumber
\end{align}
never lies at the boundary.  We find it to be at
\begin{align}
    r^* &= \tfrac{2^\lambda(2^\lambda - 1 - \lambda \ln(2))} {(2^\lambda-1)^2}, \\
    s^* &= \tfrac{1}{1-2^\lambda} + \tfrac{1}{\lambda \ln(2)},
\end{align}
giving the linear bound
\begin{equation}
    L_D (\lambda) = \log\of*{\of*{\frac{r^*}{s^*}}^{r^*} \of*{\frac{1-r^*}{1-s^*}}^{1-r^*}}.
\end{equation}

Calculating the convex linear bound analytically is not possible to our knowledge. However, following the argumentation in \cite{Fedotov}, we can obtain a parametrized convex bound as follows: The convex conjugate of $D_\text{bin}$ is given as
\begin{equation}
    D_\text{bin}^*(x, y) \coloneqq \sup_{r,s} \sof*{\begin{pmatrix}x \\ y\end{pmatrix} \cdot \begin{pmatrix}r \\ s\end{pmatrix} - D_\text{bin}(\midmid{r}{s})}.
\end{equation}
Analogously to our calculation of $L_D$, we find the supremum at
\begin{align}
    r^* &=  \frac{2^x(2^x - 1 + y \ln(2))}{(2^x - 1)^2}, \\
    s^* &= \frac{1}{1 - 2^x} - \frac{1}{y \ln(2)},
\end{align}
for $\frac{1-2^x}{\ln(2)} < y < - \frac{1-2^{-x}}{\ln(2)}$. Using 
\begin{equation}
    \begin{pmatrix} x \\ y \end{pmatrix} =  \begin{pmatrix} t \\ -t \end{pmatrix},
\end{equation}
we find 
\begin{equation}
    \begin{pmatrix} x \\ y \end{pmatrix} \cdot  \begin{pmatrix} r \\ s \end{pmatrix} = t \abs{r-s} = t T
\end{equation}
and hence
\begin{align}
    D_\text{bin}^*(t, -t) &= \sup_{T} \of*{tT - D(V)}, \\
    T &= \frac{xr+ys}{t}.
\end{align}
Plugging in $(r^*,s^*)$ we have
\begin{align}
    T(t) &= \left. \frac{x r^* + y s^*}{t}\right\vert_{x=t, y=-t} \nonumber \\
    &= \frac{(2^t-1-\ln(2^t))(1-2^t+2^t\ln(2^t))}{(2^t-1)^2\ln(2^t)}
\end{align}
and
\begin{align}
    D(t) &= \left. \log\of*{\of*{\frac{r^*}{s^*}}^{r^*} \of*{\frac{1-r^*}{1-s^*}}^{1-r^*}}\right\vert_{x=t, y=-t} \nonumber \\
    &= \frac{1-2^t+2^t\ln(2^t)}{(2^t-1)^2}\log\of*{\tfrac{t}{2^t-1}} \nonumber\\ &\quad-2^t\frac{1-2^t+\ln(2^t)}{(2^t-1)^2} \log\of*{\tfrac{2^t t}{2^t-1}}\nonumber \\
    &\quad + \log(\ln(2)),
\end{align}
where $t \geq 0$.

\end{document}